\newcommand{\Prg}{\mathcal{A}}
\newcommand{\spacer}{\textsc{Spacer}\xspace}
\newcommand{\recmc}{\textsc{RecMC}\xspace}
\newcommand{\bdsafety}{\textsc{BndSafety}\xspace}
\newcommand{\bndsafety}{\textsc{BndSafety}\xspace}
\newcommand{\ie}{{\em i.e.},\xspace}
\newcommand{\viz}{{\em viz.},\xspace}
\newcommand{\eg}{e.g.,\xspace}
\newcommand{\st}{\cdot}
\newcommand{\subheading}[1]{\noindent\textbf{#1.}}
\newcommand{\subheadingnodot}[1]{\noindent\textbf{#1}}
\newcommand{\bigand}{\bigwedge}
\newcommand{\bigor}{\bigvee}
\renewcommand{\implies}{\Rightarrow}
\newcommand{\union}{\cup}
\newcommand{\htriple}[3]{\{{#1}\} ~{#2}~ \{{#3}\}}
\renewcommand{\sum}[1]{\Sigma_{#1}}
\newcommand{\sem}[1]{\llbracket {#1} \rrbracket}
\newcommand{\iparams}[1]{\vec{\iota}_{#1}}
\newcommand{\oparams}[1]{\vec{o}_{#1}}
\newcommand{\formals}[1]{\vec{v}_{#1}}
\newcommand{\locals}[1]{\vec{\ell}_{#1}}
\newcommand{\lits}[1]{\text{\emph{lits}}({#1})}
\newcommand{\body}[1]{\beta_{#1}}
\newcommand{\paths}[1]{\mathit{Paths}({#1})}
\newcommand{\projf}[1]{\textnormal{\emph{Proj}}_{#1}}
\newcommand{\lraprojf}[1]{\textnormal{\emph{LRAProj}}_{#1}}
\renewcommand{\vec}{\overline}
\newcommand{\thry}{\mathit{Th}}
\newcommand{\expansion}[3]{{#1}\{{#2} \mapsto {#3}\}}
\newcommand{\sig}{\mathcal{S}}
\providecommand{\DontPrintSemicolon}{\dontprintsemicolon}
\renewcommand{\paragraph}[1]{\noindent \emph{#1}}
\spnewtheorem*{theorem*}{Theorem}{\bfseries}{\upshape}
\definecolor{midgrey}{rgb}{0.3,0.3,0.3}
\definecolor{darkred}{rgb}{0.7,0.1,0.1}
\title{SMT-based Model Checking for\\Recursive Programs
\thanks{This paper is originally published by Springer-Verlag as part of the
proceedings of CAV 2014. The final publication is available at
link.springer.com.}}
\titlerunning{SMT-based Model Checking for Recursive Programs}
\author{Anvesh Komuravelli \and Arie Gurfinkel \and Sagar Chaki}
\authorrunning{A.~Komuravelli et al.}
\institute{Carnegie Mellon University, Pittsburgh,
PA, USA
}
\date{}
\begin{document}
\maketitle

\begin{abstract}
  We present an SMT-based symbolic model checking algorithm for safety
  verification of recursive programs.
  The algorithm is modular and analyzes procedures individually.
  Unlike other SMT-based approaches, it maintains both
  \emph{over-} and \emph{under-approximations} of procedure
  summaries. Under-approximations are used to analyze procedure calls without
  inlining. Over-approximations are used to block
  infeasible counterexamples and detect convergence to a proof.
  We show that for programs and properties over a decidable theory, the algorithm
  is guaranteed to find a counterexample, if one exists. However, efficiency
  depends on an oracle for quantifier elimination (QE). 
%
  For Boolean Programs, the algorithm is a polynomial decision
  procedure, matching the worst-case bounds of the best BDD-based
  algorithms.
  For Linear Arithmetic (integers and rationals), we give an efficient
  instantiation of the algorithm by applying QE \emph{lazily}.  We use
  existing interpolation techniques to over-approximate QE and
  introduce \emph{Model Based Projection} to under-approximate QE.
  Empirical evaluation on SV-COMP benchmarks shows that our algorithm
  improves significantly on the state-of-the-art.
\end{abstract}


\newcommand{\D}{\mathcal{D}}
\newcommand{\B}{\mathcal{B}}
\section{Introduction}
\label{sec:introduction}

We are interested in the problem of \emph{safety} of recursive programs, \ie
deciding whether an assertion always holds. The first step in Software
Model Checking is to approximate the input program by a program model where the
program operations are terms in a first-order theory $\D$. Many program models
exist today, \eg \emph{Boolean Programs}~\cite{bebop} of SLAM~\cite{slam},
\textsc{Goto} programs of CBMC~\cite{cbmc}, \textsc{BoogiePL} of
\textsc{Boogie}~\cite{boogie}, and, indirectly, internal representations of many
tools such as \textsc{UFO}~\cite{ufo}, \textsc{HSF}~\cite{hsfc}, etc.  Given a
safety property and a program model over $\D$, it is possible to analyze bounded
executions using an oracle for \emph{Satisfiability Modulo Theories} (SMT) for
$\D$.  However, in the presence of unbounded recursion, safety is undecidable in
general. Throughout this paper, we assume that procedures cannot be passed as
parameters.


There exist several program models where safety is efficiently
decidable\footnote{This is no longer true when we allow procedures as
parameters~\cite{ed_jacm}.}, \eg Boolean Programs with unbounded recursion and
the unbounded use of stack~\cite{rhs,bebop}. The general observation behind
these algorithms is that one can \emph{summarize} the input-output behavior of a
procedure. A summary of a procedure is an input-output relation describing what
is currently known about its behavior. Thus, a summary can be used to analyze a
procedure call without inlining or analyzing the body of the
callee~\cite{ed_paper,sharir81}.  For a Boolean Program, the number of states is
finite and hence, a summary can only be updated finitely many times. This
observation led to a number of efficient algorithms that are polynomial in the
number of states, \eg the RHS framework~\cite{rhs}, recursive
state machines~\cite{pds}, and symbolic BDD-based algorithms of
\textsc{Bebop}~\cite{bebop} and \textsc{Moped}~\cite{moped}. When safety is
undecidable (\eg when $\D$ is Linear Rational Arithmetic (LRA) or Linear Integer
Arithmetic (LIA)), several existing software
model checkers work by iteratively obtaining Boolean Program abstractions using
Predicate Abstraction~\cite{cegar,slam}. In this paper, we are interested in an
alternative algorithm that works directly on the original program model without
an explicit step of Boolean abstraction. Despite the undecidability, we are
interested in an algorithm that is guaranteed to find a counterexample to
safety, if one exists.

Several algorithms have been recently proposed for verifying recursive
programs without predicate abstraction. Notable examples are
\textsc{Whale}~\cite{whale}, HSF~\cite{hsfc}, GPDR~\cite{gpdr},
Ultimate Automizer~\cite{DBLP:conf/tacas/HeizmannCDEHLNSP13,nested_itp} and Duality~\cite{duality}.
With the exception of GPDR, these algorithms are
based on a combination of Bounded Model Checking (BMC)~\cite{bmc} and
Craig Interpolation~\cite{craig}. First, they use an SMT-solver to check for a bounded
counterexample, where the bound is on the depth of the call stack (\ie
the number of nested procedure calls). Second, they use (tree)
interpolation to over-approximate procedure summaries. This is
repeated with increasing values of the bound until a counterexample is found or the
approximate summaries are inductive. The reduction to BMC ensures that
the algorithms are guaranteed to find a counterexample. However, the size of the
SMT instance grows exponentially with the bound on the call-stack (\ie linear in the size of
the call tree). Therefore, for Boolean Programs, these algorithms are at least
worst-case exponential in the number of states.

On the other hand, GPDR follows the approach of IC3~\cite{ic3} by solving
BMC incrementally without unrolling the call-graph. Interpolation is used to
over-approximate summaries and caching is used to indirectly
under-approximate them.
For some configurations, GPDR is worst-case polynomial for Boolean Programs. However, even for LRA, GPDR
might fail to find a counterexample.
\footnote{See appendix for an example.}

In this paper, we introduce $\recmc$, the first SMT-based algorithm for model
checking safety of recursive programs that is worst-case polynomial (in the
number of states) for Boolean Programs while being a co-semidecision procedure
for programs over decidable theories (see Section~\ref{sec:algo}). Our main
insight is to maintain not only over-approximations of procedure summaries
(which we call \emph{summary facts}), but also their under-approximations (which
we call \emph{reachability facts}). While summary facts are used to block
spurious counterexamples, reachability facts are used to analyze a procedure
call without inlining or analyzing the body of the callee. Our use of
reachability facts is similar to that of \emph{summary edges} of the
RHS~\cite{rhs} algorithm. This explains our complexity result for Boolean
Programs. However, our summary facts make an important difference. While the use of
summary facts is an interesting heuristic for Boolean Programs that does not
improve the worst-case complexity, it is crucial for richer theories.

Almost every step of \recmc results in existential quantification of variables.
\recmc tries to eliminate these variables, as otherwise, they would accumulate
and the size of an inferred reachability fact, for example, grows exponentially
in the bound on the call-stack. But, a na\"{i}ve use of quantifier elimination
(QE) is expensive.  Instead, we develop an alternative approach that
under-approximates QE. However, obtaining arbitrary under-approximations can
lead to divergence of the algorithm. We introduce the concept of \emph{Model
Based Projection} (MBP), for \emph{covering} $\exists \vec{x} \st
\varphi(\vec{x},\vec{y})$ by \emph{finitely-many} quantifier-free
under-approximations obtained using models of $\varphi(\vec{x},\vec{y})$. We
developed efficient MBPs (see Section~\ref{sec:lazy_qe}) for Linear Arithmetic
based on the QE methods by Loos-Weispfenning~\cite{lw} for LRA and
Cooper~\cite{cooper} for LIA. We use MBP to under-approximate reachability facts
in $\recmc$. In the best case, only a partial under-approximation is needed and
a complete quantifier elimination can be avoided.


We have implemented $\recmc$ as part of our tool \spacer using the framework of
Z3~\cite{z3} and evaluated it on 799 benchmarks
from SV-COMP~\cite{svcomp14}. \spacer significantly
outperforms the implementation of GPDR in Z3 (see Section~\ref{sec:results}).

In summary, our contributions are: (a) an efficient
SMT-based algorithm for model checking recursive programs, that
analyzes procedures individually using under- and over-approximations
of procedure summaries, (b) MBP functions
for under-approximating quantifier elimination for LRA and LIA,
(c) a new, complete algorithm for Boolean Programs,
with complexity polynomial in the number of states, similar to the
best known method~\cite{bebop}, and (d) an implementation and
an empirical evaluation of the approach.


\section{Overview}
\label{sec:overview}

\begin{figure}[t]
\centering
\includegraphics[scale=1]{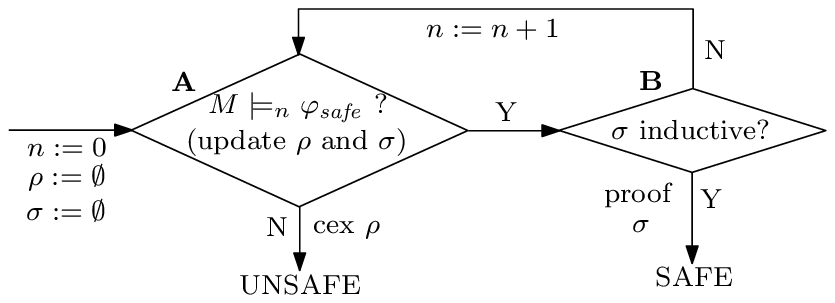}
\caption{Flow of the algorithm \recmc to check if $M \models \varphi_{\mathit{safe}}$.}
\label{fig:recmc}
\end{figure}

In this section, we give an overview of $\recmc$ and illustrate it on
an example. Let $\Prg$ be a recursive program. For simplicity of presentation, assume no loops,
no global variables and that arguments are passed by reference. Let
$P(\vec{v}) \in \Prg$ be a procedure with parameters $\vec{v}$
and let $\vec{v}_0$ be fresh variables not appearing in $P$ with $|\vec{v}| = |\vec{v}_0|$. A safety property for
$P$ is an assertion $\varphi(\vec{v}_0,\vec{v})$. We say that $P$ satisfies
$\varphi$, denoted $P(\vec{v}) \models \varphi(\vec{v}_0,\vec{v})$, iff the Hoare-triple
$\htriple{\vec{v}=\vec{v}_0}{P(\vec{v})}{\varphi(\vec{v}_0,\vec{v})}$
is valid. Note that every Hoare-triple corresponds to a safety property in this
sense, as shown by Clarke~\cite{ed_paper}, using a \emph{Rule of Adaptation}.
Given a safety property $\varphi$ and a natural number $n \ge 0$, the problem of
\emph{bounded safety} is to determine whether all executions of $P$
using a call-stack bounded by $n$ satisfy $\varphi$. We use $P(\vec{v})
\models_n \varphi(\vec{v}_0,\vec{v})$ to denote bounded safety.

The key steps of $\recmc$ are shown in Fig.~\ref{fig:recmc}. \recmc decides
safety for the main procedure $M$ of $\Prg$.
$\recmc$ maintains two \emph{assertion maps} $\rho$ and $\sigma$.
The \emph{reachability} map $\rho$ maps each procedure $P(\vec{v}) \in \Prg$ to
a set of assertions over $\vec{v}_0 \cup \vec{v}$ that under-approximate
its behavior. Similarly, the \emph{summary} map $\sigma$ maps a procedure $P$ to
a set of assertions that over-approximate its behavior. Given $P$, the maps are
partitioned according to the bound on the call-stack. That is, if
$\delta(\vec{v}_0,\vec{v}) \in \rho(P, n)$ for $n \ge 0$, then for every
model $m$ of $\delta$,
there is an execution of $P$ that begins in $m(\vec{v}_0)$ and ends in $m(\vec{v})$,
using a call-stack bounded by $n$. Similarly, if $\delta(\vec{v}_0,\vec{v}) \in
\sigma(P, n)$, then $P(\vec{v}) \models_n \delta(\vec{v}_0,\vec{v})$.


$\recmc$ alternates between two steps: (\textbf{A}) deciding
bounded safety (that also updates $\rho$ and $\sigma$ maps) and (\textbf{B})
checking whether the current proof of bounded safety is inductive
(\ie independent of the bound). It terminates when a counterexample or a
proof is found.


\begin{figure}[t]
\centering
\includegraphics[scale=.9]{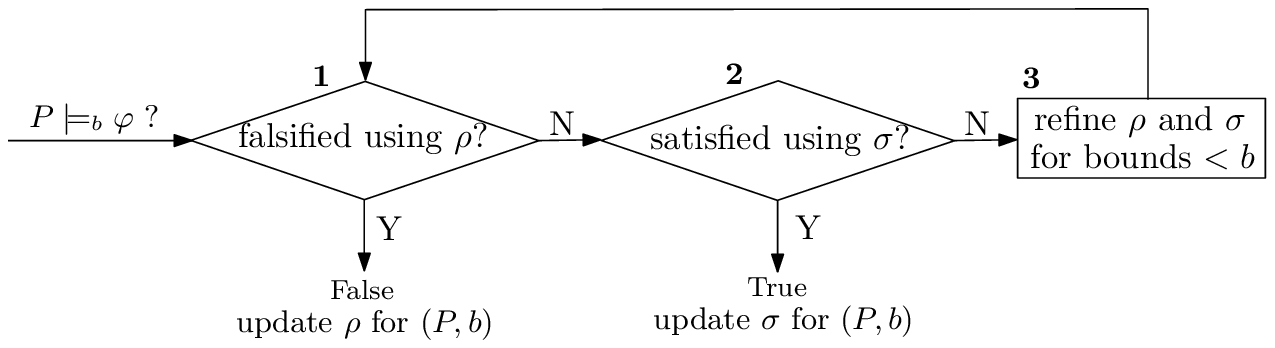}
\caption{Flow of the algorithm \bdsafety to check $P \models_b \varphi$.}
\label{fig:bdsafety}
\end{figure}

Bounded safety, $P \models_b \varphi$, is decided using $\bndsafety$ shown in
Fig.~\ref{fig:bdsafety}.  Step~{\bf 1} checks
whether $\varphi$ is falsified by current reachability facts in $\rho$ of
the callees of $P$. If so, it infers a new reachability fact
for $P$ at bound $b$ witnessing the falsification of $\varphi$. Step~{\bf 2}
checks whether $\varphi$ is satisfied using current summary
facts in $\sigma$ of the callees. If so, it infers a new summary fact for
$P$ at bound $b$ witnessing the satisfaction of $\varphi$. If
the prior two steps fail, there is a potential counterexample $\pi$ in
$P$ with a call to some procedure $R$ such that the
reachability facts of $R$ are too strong to witness $\pi$, but the
summary facts of $R$ are too weak to block it. Step~{\bf 3} updates $\rho$ and
$\sigma$ by creating (and recursively deciding) a new bounded
safety problem for $R$ at bound $b-1$.

We conclude this section with an illustration of $\recmc$ on the program
in Fig.~\ref{fig:overview_eg_prog} (adapted from~\cite{ed_paper}).
The program has 3 procedures: the main procedure \texttt{M}, and
procedures \texttt{T} and \texttt{D}. \texttt{M} calls \texttt{T} and
\texttt{D}. \texttt{T} modifies its argument \texttt{t} and calls
itself recursively. \texttt{D} decrements its argument \texttt{d}. Let the
property be $\varphi = m_0 \geq 2m + 4$.

\begin{figure}[t]
\centering
\begin {subfigure}{.3\columnwidth}
\centering
\begin{myverbbox}[\scriptsize]{\mainproc}
M (m) {
    T (m);
    D (m);
    D (m); }
\end{myverbbox}
\mainproc\\
\end{subfigure}
\begin {subfigure}{.3\columnwidth}
\centering
\begin{myverbbox}[\scriptsize]{\tandemproc}
T (t) {
    if (t>0) {
        t := t-2;
        T (t);
        t := t+1; } }
\end{myverbbox}
\tandemproc
\end{subfigure}
\begin {subfigure}{.3\columnwidth}
\centering
\begin{myverbbox}[\scriptsize]{\decrproc}
D (d) {
    d := d-1;
}
\end{myverbbox}
\decrproc
\end{subfigure}
\caption {A recursive program with 3 procedures.}
\label {fig:overview_eg_prog}
\end{figure}

\begin{figure}[t]
\centering
\includegraphics[scale=1]{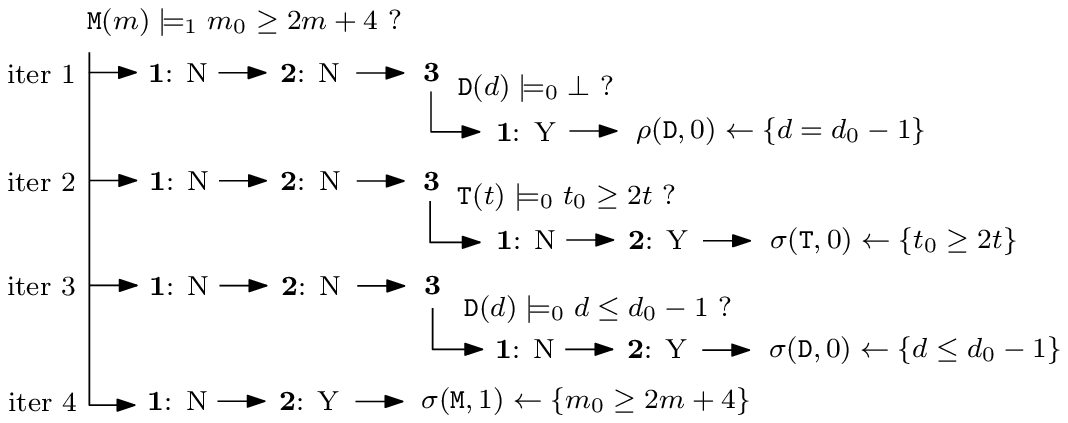}
\caption{A run of \bdsafety on program in
Fig.~\ref{fig:overview_eg_prog} and a bound 1 on the stack depth. Numbers in
bold refer to the steps in Fig.~\ref{fig:bdsafety}.}
\label{fig:overview_eg}
\end{figure}

The first iteration of $\recmc$ is trivial. The bound $n=0$ and since
\texttt{M} has no call-free executions it vacuously satisfies any
bounded safety property. Fig.~\ref{fig:overview_eg} shows the four
iterations of \bdsafety for the second iteration of $\recmc$ where $n=1$. For
this bound, the maps $\rho$ and $\sigma$ are initially empty. The first
iteration of \bdsafety finds a potential counterexample path in \texttt{M} and
the approximation for \texttt{D} is updated with a new reachability fact: $d = d_0 -1$. In the second iteration,
the approximation for \texttt{T} is updated. Note that the two calls to
\texttt{D} are ``jumped over'' using the reachability fact for
\texttt{D} computed in the first iteration. The new summary fact for \texttt{T} is: $t_0 \geq
2t$. In the third iteration, the approximation for \texttt{D} is updated again, now with a
summary fact $d \leq d_0 -1$. Finally, the summary facts for
\texttt{T} and \texttt{D} at bound $0$ are sufficient to establish bounded safety
at $n=1$. At this point, the summary map $\sigma$ is:
\begin{align*}
  \sigma(\texttt{M}, 1) &= \{m_0 \geq 2m + 4\} &
  \sigma(\texttt{T}, 0) &= \{t_0 \geq 2t\} &
  \sigma(\texttt{D}, 0) &= \{d \leq d_0 -1\}
\end{align*}
Ignoring the bounds, $\sigma$ is inductive. For example, we can prove that the
body of \texttt{T} satisfies $t_0 \geq 2t$, assuming that the calls do. Thus,
step \textbf{B} of $\recmc$ succeeds and the algorithm terminates declaring the program
SAFE.  In the rest of the paper, we show how to automate $\recmc$ using an
SMT-oracle.


\section {Preliminaries}
\label{sec:prelims}

\newcommand{\vale}[2]{\mathcal{V}_{#1}\left({#2}\right)}

Consider a first-order language with equality and let $\sig$ be its signature, \ie the set of
non-logical function and predicate symbols (including equality). An $\sig$-\emph{structure} $I$ consists of a domain of
interpretation, denoted $|I|$, and assigns elements of $|I|$ to variables, and
functions and predicates on $|I|$ to the symbols of $\sig$. Let $\varphi$ be
a formula. We assume the usual definition of satisfaction of $\varphi$ by
$I$, denoted $I \models \varphi$. $I$ is called a \emph{model} of $\varphi$ iff
$I \models \varphi$ and this can be extended to a set of formulas. A
first-order $\sig$-\emph{theory} $\thry$ is a set of deductively closed
$\sig$-sentences.  $I$ satisfies $\varphi$ modulo $\thry$, denoted $I
\models_\thry \varphi$, iff $I \models \thry \cup \{\varphi\}$.  $\varphi$ is
\emph{valid} modulo $\thry$, denoted $\models_\thry \varphi$, iff every model of
$\thry$ is also a model of $\varphi$.

Let $I$ be an $\sig$-structure and $\vec{w}$ be a list of fresh
function/predicate symbols not in $\sig$. A $(\sig \cup \vec{w})$-structure
$J$ is called an \emph{expansion} of $I$ to $\vec{w}$ iff $|J| = |I|$ and $J$
agrees with $I$ on the assignments to all variables and the symbols of $\sig$. We use the
notation $\expansion{I}{\vec{w}}{\vec{u}}$ to denote the expansion of $I$ to
$\vec{w}$ that assigns the function/predicate $u_i$ to the symbol $w_i$.
%
For an $\sig$-sentence $\varphi$, we
write $I(\varphi)$ to denote the truth value of $\varphi$ under $I$. For a
formula $\varphi(\vec{x})$ with free variables $\vec{x}$, we overload the
notation $I(\varphi)$ to mean $\{\vec{a} \in |I|^{|\vec{x}|} \mid
\expansion{I}{\vec{x}}{\vec{a}} \models \varphi\}$. For simplicity of presentation, we sometimes identify the
truth value \emph{true} with $|I|$ and \emph{false} with $\emptyset$.

We assume that programs do not have internal procedures and that procedures
cannot be passed as parameters. Furthermore, without loss of generality, we
assume that programs do not have loops or global variables. In the following, we
define programs using a logical representation, as opposed to giving a
concrete syntax.
%
%
%
A \emph{program} $\Prg$ is a finite list of procedures with a designated
\emph{main} procedure $M$ where the program begins. A \emph{procedure} $P$ is a
tuple $\langle \iparams{P}, \oparams{P}, \sum{P}, \locals{P}, \body{P} \rangle$,
where
    (a) $\iparams{P}$ is the finite list of variables denoting the input
    values of the parameters,
    (b) $\oparams{P}$ is the finite list of variables denoting the output
    values of the parameters,
    (c) $\sum{P}$ is a fresh predicate symbol of arity $|\iparams{P}| +
    |\oparams{P}|$,
    (d) $\locals{P}$ is the finite list of local variables, and
    (e) $\body{P}$ is a quantifier-free sentence over the signature
        $(\sig \cup \{\sum{Q} \mid Q\in\Prg\} \cup \iparams{P} \cup \oparams{P} \cup
        \locals{P})$ in which a predicate symbol $\sum{Q}$ appears only positively.
We use $\formals{P}$ to denote $\iparams{P} \cup \oparams{P}$.

Intuitively, for a
procedure $P$, $\sum{P}$ is used to denote its semantics and $\body{P}$ encodes
its body using the predicate symbol $\sum{Q}$ for a call to the procedure $Q$.
We require that a predicate symbol $\sum{Q}$ appears only positively in
$\body{P}$ to ensure a fixed-point characterization of the semantics as shown
later on. For example, for the signature $\sig = \langle 0, \mathit{Succ}, -, +, \le, >, =
\rangle$, the program in Fig.~\ref{fig:overview_eg_prog} is
represented as  $\langle M, T, D \rangle$ with $M = \langle m_0, m, \sum{M},
\langle \ell_0,\ell_1 \rangle, \body{M} \rangle$, $T = \langle t_0, t, \sum{T},
\langle \ell_0,\ell_1 \rangle, \body{T} \rangle$ and $D = \langle d_0, d,
\sum{D}, \emptyset, \body{D} \rangle$, where
\begin{equation}
\label{eqn:oveview_eg}
\begin{aligned}
    &\body{M} = \sum{T}(m_0,\ell_0) \land \sum{D}(\ell_0,\ell_1)
    \land \sum{D}(\ell_1,m)
    \qquad\body{D} =  (d = d_0 - 1)
 \\
    &\body{T} = \left( t_0 \le 0 \land t_0 = t \right) ~\lor~ \left( t_0 > 0 \land
    \ell_0 = t_0 - 2 \land \sum{T}(\ell_0,\ell_1) \land t = \ell_1 + 1 \right)
\end{aligned}
\end{equation}

Here, we abbreviate $\mathit{Succ}^i(0)$ by $i$ and $(m_0, t_0, d_0)$ and
$(m,t,d)$ denote the input and the output values of the parameters of the
original program, respectively. For a procedure $P$, let $\paths{P}$ denote the
set of all prime-implicants of $\body{P}$. Intuitively, each element of
$\paths{P}$ encodes a path in the procedure.

Let $\Prg = \langle P_0,\dots,P_n \rangle$ be a program and $I$ be an
$\sig$-structure. Let $\vec{X}$ be a list of length $n$ such that each $X_i$
is either (i) a truth value if $|\formals{P_i}| = 0$, or (ii) a subset of
$|I|^{|\formals{P_i}|}$ if $|\formals{P_i}| \ge 1$.  Let $J(I,\vec{X})$ denote
the expansion $\expansion{\expansion{I}{\sum{P_0}}{X_0}\dots}{\sum{P_n}}{X_n}$.
The \emph{semantics} of a procedure $P_i$ given $I$, denoted $\sem{P_i}_I$,
characterizes all the terminating executions of $P_i$ and is defined as follows.
$\langle \sem{P_0}_I, \dots, \sem{P_n}_I \rangle$ is the (pointwise) least
$\vec{X}$ such that for all $Q \in \Prg$, $J(I,\vec{X}) \models \forall
\formals{Q} \cup \locals{Q} \st (\body{Q} \implies \sum{Q}(\formals{Q}))$. This
has a well-known least fixed-point characterization~\cite{ed_paper}.

For a bound $b \ge 0$ on the call-stack, the \emph{bounded semantics} of a
procedure $P_i$ given $I$, denoted $\sem{P_i}^b_I$, characterizes all the
executions using a stack of depth bounded by $b$ and is defined by induction on
$b$:
\begin{align*}
    \sem{P_i}^0_I = J(I,\langle \emptyset,\dots,\emptyset \rangle)
                        (\exists \locals{P_i} \st \body{P_i}),
&\quad
    \sem{P_i}^b_I = J(I,\langle \sem{P_0}^{b-1}_I, \dots, \sem{P_n}^{b-1}_I \rangle)
                        (\exists \locals{P_i} \st \body{P_i})
\end{align*}

An \emph{environment} is a function that maps a predicate symbol $\sum{P}$ to a
formula over $\formals{P}$. Given a formula $\tau$ and an environment $E$,
we abuse the notation $\sem{\cdot}$ and write $\sem{\tau}_E$ for the formula
obtained by instantiating every predicate symbol $\sum{P}$ by $E(\sum{P})$ in $\tau$.
Let $\thry$ be an $\sig$-theory. A \emph{safety property} for a procedure $P
\in \Prg$ is a formula over
$\formals{P}$. $P$ satisfies a safety property $\varphi$ w.r.t $\thry$,
denoted $P \models_\thry \varphi$, iff for all models $I$ of $\thry$, $\sem{P}_I
\subseteq I(\varphi)$.
A \emph{safety property} $\psi$ of the program $\Prg$ is a safety property of
its main procedure.  A \emph{safety proof} for $\psi(\formals{M})$ is an
environment $\Pi$ that is both safe and inductive:
\begin{align*}
    \models_\thry \sem{\forall \vec{x} \st \sum{M}(\vec{x}) \implies
    \psi(\vec{x})}_\Pi,
&\quad
    \forall P \in \Prg \st \models_\thry \sem{\forall \formals{P} \cup \locals{P} \st
    (\body{P} \implies \sum{P}(\formals{P}))}_\Pi
\end{align*}

Given a formula $\varphi(\formals{P})$ and $b \ge 0$, a procedure $P$
satisfies \emph{bounded safety} w.r.t $\thry$, denoted $P
\models_{b,\thry} \varphi$, iff for all
models $I$ of $\thry$, $\sem{P}^b_I \subseteq I(\varphi)$. In this case, we
also call $\varphi$ a \emph{summary fact} for $\langle P,b \rangle$. We call
$\varphi$ a \emph{reachability fact} for $\langle P,b \rangle$ iff $I(\varphi)
\subseteq \sem{P}^b_I$, for all models $I$ of $\thry$.  Intuitively,
\emph{summary facts} and \emph{reachability facts} for $\langle P,b \rangle$,
respectively, over- and under-approximate $\sem{P}^b_I$ for every model $I$ of
$\thry$.


A \emph{bounded assertion map} maps a procedure $P$ and a natural number $b \ge
0$ to a set of formulas over $\formals{P}$. Given a
bounded assertion map $m$ and $b \ge 0$, we define two special environments
$U_m^b$ and $O_m^b$ as follows.
\begin{align*}
    U_m^b : \sum{P} \mapsto \bigor \{\delta \in m(P,b') \mid b' \le b\} &
    \quad\quad
    O_m^b : \sum{P} \mapsto \bigand \{\delta \in m(P,b') \mid b' \ge b\}
\end{align*}
We use $U_m^b$ and $O_m^b$ to under- and over-approximate the bounded semantics.
For convenience, let $U_m^{-1}$ and $O_m^{-1}$ be environments that map every
symbol to $\bot$.

\newcommand{\node}[3]{\langle {#1}, {#2}, {#3} \rangle}
\newcommand{\nodes}{\mathcal{Q}}

\newcommand{\state}[3]{{#1} \parallel {#2} \parallel {#3}}

\newcommand{\equals}{=}
\newcommand{\comma}{,}
\newcommand{\leftsq}{[}
\newcommand{\rightsq}{]}

\newcommand{\subst}[2]{{#1} \leftarrow {#2}}
\newcommand{\pre}[2]{\text{\tt{pre}}[#1]({#2})}

\section{Model Checking Recursive Programs}
\label{sec:algo}




In this section, we present our algorithm  $\recmc (\Prg,
\varphi_{\mathit{safe}})$ that determines whether a program $\Prg$ satisfies a
safety property $\varphi_{\mathit{safe}}$. Let $\sig$ be the signature of the
first-order language under consideration and assume a fixed $\sig$-theory $\thry$.
To avoid clutter, we drop the subscript $\thry$ from
the notation $\models_\thry$ and $\models_{b,\thry}$. We also establish the soundness and complexity of \recmc. An
efficient instantiation of $\recmc$ to Linear Arithmetic is presented in
Section~\ref{sec:lazy_qe}.

\begin{figure}[t]
\begin{subfigure}{.51\textwidth}
\begin{algorithm}[H]
\scriptsize
\DontPrintSemicolon
\textsc{RecMC}$(\Prg, \varphi_{\mathit{safe}})$\;
\nl $n \gets 0 \mathbin{;} \rho\gets \emptyset \mathbin{;} \sigma \gets \emptyset$ \;
\nl \While {true} {
\nl     $\mathit{res},\rho,\sigma \gets \bndsafety(\Prg, \varphi_{\mathit{safe}}, n,\rho,\sigma)$\;
\nl     \If {res \emph{is} UNSAFE} {
\nl         \Return UNSAFE, $\rho$
        } \Else {
\nl         $\mathit{ind}, \sigma \gets \textsc{CheckInductive}(\Prg,\sigma,n)$
\nl         \If {ind} {
\nl             \Return SAFE, $\sigma$
            }
\nl         $n \gets n + 1$ \;
        }
    }
\end{algorithm}
\end{subfigure}
\begin{subfigure}{.6\textwidth}
\begin{algorithm}[H]
\scriptsize
\DontPrintSemicolon
\setcounter{AlgoLine}{9}
\textsc{CheckInductive}($\Prg$, $\sigma$, $n$)\;
\nl $\mathit{ind} \gets \mathit{true}$\;
\nl \ForEach {$P \in \Prg$} {
\nl     \ForEach {$\delta \in \sigma(P,n)$} {
\nl         \If {$\models \sem{\body{P}}_\sigma^n \implies \delta$} {
\nl             $\sigma \gets \sigma \cup (\langle P, n+1 \rangle \mapsto \delta)$
            } \Else {
\nl             $\mathit{ind} \gets \mathit{false}$
            }
        }
    }
\nl \Return $(\text{\emph{ind}}, \sigma)$
\end{algorithm}
\end{subfigure}
\vspace{-0.2in}
\caption{Pseudo-code of \textsc{RecMC}.}
\label{fig:unbounded}
\end{figure}

\paragraph{\textbf{Main Loop}.} \recmc maintains two \emph{bounded
  assertion maps} $\rho$ and $\sigma$ for reachability and summary facts,
respectively. For brevity, for a first-order formula $\tau$, we write
$\sem{\tau}_\rho^b$ and $\sem{\tau}_\sigma^b$ to denote
$\sem{\tau}_{U_\rho^b}$ and $\sem{\tau}_{O_\sigma^b}$,
respectively, where the environments $U_\rho^b$ and $O_\sigma^b$ are
as defined in Section~\ref{sec:prelims}. Intuitively,
$\sem{\tau}_\rho^b$ and $\sem{\tau}_\sigma^b$, respectively, under-
and over-approximate $\tau$ using $\rho$ and $\sigma$.

The pseudo-code of the main loop of \recmc (corresponding to the flow
diagram in Fig.~\ref{fig:recmc}) is shown in Fig.~\ref{fig:unbounded}.
\recmc follows an \emph{iterative deepening} strategy. In each
iteration, $\bndsafety$ (described below) checks whether all
executions of $\Prg$ satisfy $\varphi_\mathit{safe}$ for a bound $n
\ge 0$ on the call-stack, \ie if $M \models_n \varphi_{\mathit{safe}}$.
$\bndsafety$ also updates the maps $\rho$ and $\sigma$. Whenever $\bndsafety$ returns
$\mathit{UNSAFE}$, the reachability facts in $\rho$ are sufficient to
construct a counterexample and the loop terminates. Whenever
$\bndsafety$ returns $\mathit{SAFE}$, the summary facts in $\sigma$
are sufficient to prove the absence of a counterexample for the
current bound $n$ on the call-stack. In this case, if $\sigma$ is also
inductive, as determined by \textsc{CheckInductive}, $O_\sigma^n$ is a
safety proof and the loop terminates.  Otherwise, the bound on the
call-stack is incremented and a new iteration of the loop begins. Note that, as
a side-effect of \textsc{CheckInductive}, some
summary facts are propagated to the bound $n+1$. This is similar to
\emph{push generalization} in IC3~\cite{ic3}.

\begin{figure}[t]
\begin{mathpar}
    \inferrule*[left=Init]
        { }
        { \state {\{\node {M} {\neg\varphi_\text{\emph{safe}}} {n}\}}
                 {\rho_{\mathit{Init}}}
                 {\sigma_{\mathit{Init}}}
        }\\
    \inferrule*[left=Sum]
        { \state {\nodes} {\rho} {\sigma} \\
          \node {P} {\varphi} {b} \in \nodes \\
          \models \sem{\body{P}}_\sigma^{b-1} \implies \neg\varphi }
        { \state {\nodes \setminus \{\node {P} {\eta} {c} ~|~ c \le b, \models \sem{\sum{P}}_\sigma^c \land \psi \implies \neg\eta\}}
                 {\rho}
                 {\sigma \union \{\langle P, b \rangle \mapsto \psi\}} }
\end{mathpar}
        \flushright{$\textnormal{where}~ \psi \equals
            \textsc{Itp} (\sem{\body{P}}_\sigma^{b-1} \comma \neg\varphi)$}
\begin{mathpar}
    \inferrule*[left=Reach]
        { \state {\nodes} {\rho} {\sigma} \\
          \node {P} {\varphi} {b} \in \nodes \\
          \pi \in \paths{P} \\
          \not\models \sem{\pi}_\rho^{b-1} \implies \neg\varphi }
        { \state {\nodes \setminus \{\node {P} {\eta} {c} ~|~ c\ge b, \not\models \psi \implies \neg\eta\}}
                 {\rho \union \{\langle P, b \rangle \mapsto \psi\}}
                 {\sigma} }
\end{mathpar}
        \flushright{$\textnormal{where}~ \psi \equals
            \exists \locals{P} \st\sem{\pi}_\rho^{b-1}$}
\begin{mathpar}
    \inferrule*[left=Query]
        { \state {\nodes} {\rho} {\sigma} \\
          \node {P} {\varphi} {b} \in \nodes \\
          \models \sem{\body{P}}_\rho^{b-1} \implies \neg\varphi \\
          \pi \in \paths{P} \\\\
          \pi = \pi_u \land \sum{R}(\vec{a}) \land \pi_v \\
          \models \sem{\pi_u}_\sigma^{b-1} \land \sem{\sum{R}(\vec{a})}_\rho^{b-1} \land \sem{\pi_v}_\rho^{b-1} \implies \neg\varphi \\
          \not\models \sem{\pi_u}_\sigma^{b-1} \land \sem{\sum{R}(\vec{a})}_\sigma^{b-1} \land \sem{\pi_v}_\rho^{b-1} \implies \neg\varphi }
        { \state {\nodes \union \{\node {R} {\psi} {b-1}\}}
                 {\rho} {\sigma} }
\end{mathpar}
        \flushright{$\textnormal{where}~
                    \begin{cases}
                        \psi \equals
                        \left( \exists \left( \formals{P} \cup \locals{P} \right) \setminus \vec{a} \st
                                    \sem{\pi_u}_\sigma^{b-1} \land
                                    \sem{\pi_v}_\rho^{b-1} \land \varphi \right)
                              [\vec{a} \leftarrow \formals{R}]\\
                        \textnormal{for all } \node{R}{\eta}{b-1} \in \nodes, \models \psi \implies \neg\eta
                    \end{cases}$}
\begin{mathpar}
    \inferrule*[left=Unsafe]
        { \state {\emptyset}
                 {\rho}
                 {\sigma} \\
          \not\models \sem{\sum{M}}_\rho^n \implies \varphi_\text{\emph{safe}} }
        { \text{\emph{UNSAFE}} }

    \inferrule*[left=Safe]
        { \state {\emptyset}
                 {\rho}
                 {\sigma} \\
          \models \sem{\sum{M}}_\sigma^n \implies \varphi_\text{\emph{safe}} }
        { \text{\emph{SAFE}} }
\end{mathpar}
\caption {Rules defining $\bndsafety(\Prg,\varphi_{\mathit{safe}},n,\rho_{\mathit{Init}},\sigma_{\mathit{Init}})$.}
\label{fig:basic_algo}
\end{figure}

\paragraph{\textbf{Bounded safety}.} We describe the routine
$\bndsafety(\Prg,\varphi_{\mathit{safe}},n,\rho_{\mathit{Init}},\sigma_{\mathit{Init}})$ as an abstract
transition system~\cite{DBLP:journals/jacm/NieuwenhuisOT06} defined by the inference rules
shown in Fig.~\ref{fig:basic_algo}. Here, $n$ is the current bound on the
call-stack and $\rho_{\mathit{Init}}$ and $\sigma_{\mathit{Init}}$ are the maps of reachability and summary
facts input to the routine. A state of $\bndsafety$ is a triple $\state
{\nodes}{\rho}{\sigma}$, where $\rho$ and $\sigma$ are the current maps and
$\nodes$ is a set of triples $\langle P, \varphi, b \rangle$ for a procedure
$P$, a formula $\varphi$ over $\formals{P}$, and a number $b \ge 0$.
A triple $\langle P, \varphi, b \rangle \in \nodes$ is called a \emph{bounded
reachability query} and asks whether $P \not\models_b \neg\varphi$, \ie whether
there is an execution in $P$ using a call-stack bounded by $b$ where the values
of $\formals{P}$ satisfy $\varphi$.

\bndsafety starts with a single query $\langle M, \neg
\varphi_\mathit{safe}, n \rangle$ and initializes the maps of
reachability and summary facts (rule \textsc{Init}). It checks whether $M
\models_n \varphi_\mathit{safe}$ by inferring new summary and reachability
facts to answer existing queries (rules \textsc{Sum} and \textsc{Reach}) and generating new queries
(rule \textsc{Query}). When there are no queries left to answer, \ie $\nodes$ is
empty, it terminates with a result of either $\mathit{UNSAFE}$ or $\mathit{SAFE}$ (rules
\textsc{Unsafe} and \textsc{Safe}).

\subheadingnodot{\textsc{Sum}} infers a new summary fact when a query
$\langle P, \varphi, b \rangle$ can be answered negatively. In this
case, there is an over-approximation of the bounded semantics of $P$ at $b$, obtained using
the summary facts of callees at bound $b-1$, that is unsatisfiable with
$\varphi$. That is, $\models \sem{\body{P}}_\sigma^{b-1} \implies
\neg\varphi$. The inference of the new fact is by
interpolation~\cite{craig} (denoted by \textsc{Itp} in the
side-condition of the rule). Thus, the new summary fact $\psi$ is a
formula over $\formals{P}$ such that $\models \left( \sem{\body{P}}_\sigma^{b-1}
\implies \psi(\formals{P}) \right) ~\land~ (\psi(\formals{P}) \implies
\neg\varphi)$.
Note that $\psi$ over-approximates the bounded semantics of $P$ at $b$.  Every
query $\langle P,\eta,c \rangle \in \nodes$ such that $\eta$ is unsatisfiable
with the updated environment $O^c_\sigma(\sum{P})$ is immediately answered and
removed.

\subheadingnodot{\textsc{Reach}} infers a new reachability fact when a query
$\langle P, \varphi, b \rangle$ can be answered positively. In this case,
there is an under-approximation of the bounded semantics of $P$ at $b$, obtained
using the reachability facts of callees at bound $b-1$, that is satisfiable
with $\varphi$. That is, $\not\models \sem{\body{P}}_\rho^{b-1}
\implies \neg\varphi$. In particular, there exists a path $\pi$ in
$\mathit{Paths}(P)$ such that $\not\models \sem{\pi}_\rho^{b-1} \implies
\neg\varphi$. The new reachability fact $\psi$ is obtained by choosing such a
$\pi$ non-deterministically and existentially quantifying all local variables
from $\sem{\pi}_\rho^{b-1}$. Note that $\psi$ under-approximates the bounded
semantics of $P$ at $b$.  Every query $\langle P,\eta,c \rangle \in \nodes$ such
that $\eta$ is satisfiable with the updated environment $U^c_\rho(\sum{P})$ is
immediately answered and removed.

\subheadingnodot{\textsc{Query}} creates a new query when a query $\langle
P,\varphi,b \rangle$ cannot be answered using current $\rho$ and $\sigma$. In
this case, the current over-approximation of the bounded semantics of $P$ at $b$
is satisfiable with $\varphi$ while its current under-approximation is
unsatisfiable with
$\varphi$. That is, $\not\models \sem{\body{P}}_\sigma^{b-1} \implies \neg\varphi$ and
$\models \sem{\body{P}}_\rho^{b-1} \implies \neg\varphi$. In particular, there exists
a path $\pi$ in $\mathit{Paths}(P)$ such that $\not\models \sem{\pi}_\sigma^{b-1}
\implies \neg\varphi$ and $\models \sem{\pi}_\rho^{b-1} \implies \neg\varphi$.
Intuitively, $\pi$ is a potential counterexample path that needs to be checked for
feasibility. Such a $\pi$ is chosen non-deterministically. $\pi$ is guaranteed
to have a call $\sum{R}(\vec{a})$ to a procedure $R$ such that the
under-approximation $\sem{\sum{R}(\vec{a})}_\rho^{b-1}$ is too strong to witness $\pi$ but the
over-approximation $\sem{\sum{R}(\vec{a})}_\sigma^{b-1}$ is too weak to block it. That is, $\pi$ can be
partitioned into a prefix $\pi_u$, a call $\sum{R}(\vec{a})$ to $R$, and a
suffix $\pi_v$ such that the following hold:
\begin{align*}
  \models \sem{\sum{R}(\vec{a})}_\rho^{b-1} &\implies \left((\sem{\pi_u}_\sigma^{b-1}
  \land \sem{\pi_v}_\rho^{b-1}) \implies \neg \varphi\right) \\
  \not\models \sem{\sum{R}(\vec{a})}_\sigma^{b-1} &\implies \left((\sem{\pi_u}_\sigma^{b-1}
  \land \sem{\pi_v}_\rho^{b-1}) \implies \neg \varphi\right)
\end{align*}
Note that the prefix $\pi_u$ and the suffix $\pi_v$ are over- and
under-approximated, respectively.
A new query $\langle R,\psi,b-1 \rangle$ is created where $\psi$ is obtained by
existentially quantifying all variables from $\sem{\pi_u}_\sigma^{b-1} \land
\sem{\pi_v}_\rho^{b-1} \land \varphi$ except the arguments $\vec{a}$ of the
call, and renaming appropriately. If the new query is answered
negatively (using \textsc{Sum}), all executions along $\pi$ where the values of
$\formals{P} \cup \locals{P}$ satisfy $\sem{\pi_v}_\rho^{b-1}$ are spurious counterexamples.
An additional side-condition requires that $\psi$ ``does not overlap''
with $\eta$ for any other query $\langle R,\eta,b-1 \rangle$ in $\nodes$.
This is necessary for termination of \bndsafety (Theorem~\ref{thm:termination}). In practice, the side-condition is
trivially satisfied by always applying the rule to $\langle P, \varphi, b
\rangle$ with the smallest $b$.

\begin{figure}[t]
\centering
\footnotesize
\begin{tabular}{c|c|c|c}
& $\pi_i$ & $\sem{\pi_i}_\rho^0$ & $\sem{\pi_i}_\sigma^0$ \\
\hline
$i=1$ & $\sum{T}(m_0,\ell_0)$ & $\bot$ & $\top$ \\
$i=2$ & $\sum{D}(\ell_0,\ell_1)$ & $\ell_1=\ell_0-1$ & $\top$ \\
$i=3$ & $\sum{D}(\ell_1,m)$ & $m=\ell_1-1$ & $\top$ \\
\end{tabular}
\caption {Approximations of the only path $\pi$ of the procedure $M$ in
Fig.~\ref{fig:overview_eg_prog}.}
\label{fig:query_eg}
\end{figure}

For example, consider the program in Fig.~\ref{fig:overview_eg_prog} represented
by (\ref{eqn:oveview_eg}) and the query $\langle M, \varphi, 1 \rangle$ where
$\varphi \equiv m_0 < 2m+4$.  Let $\sigma = \emptyset$, $\rho(D,0) =
\{d=d_0-1\}$ and $\rho(T,0) = \emptyset$. Let $\pi = (\sum{T}(m_0,\ell_0) \land
\sum{D}(\ell_0,\ell_1) \land \sum{D}(\ell_1,m))$ denote the only path in the
procedure $M$. Fig.~\ref{fig:query_eg} shows $\sem{\pi_i}_\rho^0$ and
$\sem{\pi_i}_\sigma^0$ for each conjunct $\pi_i$ of $\pi$. As the figure shows,
$\sem{\pi}_\sigma^0$ is satisfiable with $\varphi$, witnessed by the execution $e
\equiv \langle m_0=3, \ell_0=3, \ell_1=2, m=1 \rangle$. Note that this execution
also satisfies $\sem{\pi_2 \land \pi_3}_\rho^0$. But,
$\sem{\pi_1}_\rho^0$ is too strong to witness it, where $\pi_1$ is the call
$\sum{T}(m_0,\ell_0)$.  To create a new query for $T$, we first existentially quantify
all variables other than the arguments $m_0$ and $\ell_0$ from $\pi_2 \land
\pi_3 \land \varphi$, obtaining $m_0 < 2\ell_0$.  Renaming the arguments by the
parameters of $T$ results in the new query $\langle T,t_0<2t,0 \rangle$. Further
iterations of \bndsafety would answer this query negatively making the execution
$e$ spurious. Note that this would also make all other executions where the
values to $\langle m_0, \ell_0, \ell_1, m \rangle$ satisfy
$\sem{\pi_2\land\pi_3}_\rho^0$ spurious.

\paragraph{\textbf{Soundness and Complexity}.}
Soundness of $\recmc$ follows from that of
$\bndsafety$, which can be shown by a case analysis on the
inference rules\footnote{Proofs of all of the theorems are in the
  Appendix.}.

\begin{theorem}
\label{thm:soundness}
\bndsafety and \recmc are sound.
\end{theorem}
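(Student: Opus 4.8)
The plan is to prove soundness of \bndsafety first and then lift it to \recmc. For \bndsafety I would carry the global invariant that the maps only ever hold valid facts: for every $P$ and $b$, and every model $I$ of $\thry$, each $\delta \in \rho(P,b)$ satisfies $I(\delta) \subseteq \sem{P}^b_I$ (it is a reachability fact for $\langle P,b\rangle$) and each $\delta \in \sigma(P,b)$ satisfies $\sem{P}^b_I \subseteq I(\delta)$ (it is a summary fact for $\langle P,b\rangle$). I would prove this invariant by induction on the number of applied inference rules. The base case is rule \textsc{Init}: $\rho_{\mathit{Init}}$ and $\sigma_{\mathit{Init}}$ satisfy the invariant, since they are either empty (first call from \recmc) or carried over unchanged from a previous \recmc iteration that already maintained it.

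The engine of the argument is a monotone-approximation lemma relating the environments $U_\rho^b$ and $O_\sigma^b$ to the true bounded semantics. Because every $\sum{Q}$ occurs only positively in each $\body{P}$, the map $E \mapsto \sem{\body{P}}_E$ is monotone in $E$; and the bounded semantics is monotone in the bound, $\sem{P}^{b_1}_I \subseteq \sem{P}^{b_2}_I$ for $b_1 \le b_2$ (proved by a separate induction on the bound, again using positivity). Combining these with the invariant I would show that, pointwise, $U_\rho^{b}$ lies below and $O_\sigma^{b}$ lies above the environment $\langle \sem{P_0}^{b}_I, \dots \rangle$ of true bounded semantics: for $\rho$ each disjunct is below $\sem{P}^{b'}_I \subseteq \sem{P}^{b}_I$ with $b' \le b$, and for $\sigma$ each conjunct is above $\sem{P}^{b'}_I \supseteq \sem{P}^{b}_I$ with $b' \ge b$. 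Plugging these into $\sem{P}^b_I = J(I,\langle \sem{P_0}^{b-1}_I, \dots \rangle)(\exists \locals{P} \st \body{P})$, and using that existential projection preserves $\subseteq$, yields the two containments $I(\sem{\exists \locals{P} \st \body{P}}_\rho^{b-1}) \subseteq \sem{P}^b_I \subseteq I(\sem{\exists \locals{P} \st \body{P}}_\sigma^{b-1})$.

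With the lemma in hand the inductive step is a case analysis. \textsc{Query} changes neither map, so it preserves the invariant trivially. For \textsc{Reach} the new fact is $\psi = \exists \locals{P} \st \sem{\pi}_\rho^{b-1}$ with $\pi \in \paths{P}$; since $\pi$ is a prime-implicant, $\models \pi \implies \body{P}$, and this implication survives substitution of the environment, giving $\sem{\pi}_\rho^{b-1} \implies \sem{\body{P}}_\rho^{b-1}$, so the under-approximation direction of the lemma yields $I(\psi) \subseteq \sem{P}^b_I$. For \textsc{Sum} the interpolant $\psi$ is over $\formals{P}$ and satisfies $\models \sem{\body{P}}_\sigma^{b-1} \implies \psi$ and $\models \psi \implies \neg\varphi$; since $\psi$ omits $\locals{P}$, the first fact gives $I(\exists \locals{P} \st \sem{\body{P}}_\sigma^{b-1}) \subseteq I(\psi)$, and the over-approximation direction then gives $\sem{P}^b_I \subseteq I(\psi)$, so $\psi$ is a valid summary fact. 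Removing subsumed queries in either rule does not touch the facts stored in $\rho$ or $\sigma$, hence preserves the invariant.

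Correctness of the verdicts then follows directly from the lemma. In \textsc{Safe}, $\models \sem{\sum{M}}_\sigma^n \implies \varphi_{\mathit{safe}}$ together with $\sem{M}^n_I \subseteq I(O_\sigma^n(\sum{M}))$ gives $\sem{M}^n_I \subseteq I(\varphi_{\mathit{safe}})$, i.e. $M \models_n \varphi_{\mathit{safe}}$; in \textsc{Unsafe}, $\not\models \sem{\sum{M}}_\rho^n \implies \varphi_{\mathit{safe}}$ together with $I(U_\rho^n(\sum{M})) \subseteq \sem{M}^n_I$ exhibits a model violating $\varphi_{\mathit{safe}}$ within the bound, so $M \not\models_n \varphi_{\mathit{safe}}$. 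To lift to \recmc, an UNSAFE answer at bound $n$ is a genuine counterexample since $\sem{M}^n_I \subseteq \sem{M}_I$; and when \textsc{CheckInductive} succeeds, the checked condition $\models \sem{\body{P}}_\sigma^n \implies \delta$ for all $\delta \in \sigma(P,n)$ makes $O_\sigma^n$ an inductive environment which, being also safe (from the preceding \textsc{Safe}), is a safety proof, so $M \models \varphi_{\mathit{safe}}$ by the least fixed-point characterization of $\sem{\cdot}_I$ (an inductive environment is a pre-fixed point and hence over-approximates the least fixed point). I expect the main obstacle to be the bound-bookkeeping inside the approximation lemma: getting the off-by-one between the procedure's bound $b$ and its callees' bound $b-1$ to mesh correctly with the positivity-driven monotonicity of $E \mapsto \sem{\body{P}}_E$, and checking that both existential projection and the prime-implicant implication $\pi \implies \body{P}$ are preserved under substitution of the $U_\rho$ and $O_\sigma$ environments.
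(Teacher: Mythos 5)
Your proposal is correct and takes essentially the same route as the paper's proof: the same central invariant that $U_\rho^b$ and $O_\sigma^b$ under- and over-approximate the bounded semantics (you state it per-fact, which is equivalent via monotonicity of $\sem{P}^b_I$ in $b$), preserved by the same case analysis on \textsc{Sum} and \textsc{Reach}, with the \textsc{Safe}/\textsc{Unsafe} verdicts and the lift to \recmc read off from the invariant exactly as in the paper. Your explicit positivity/monotonicity lemma and the worked-out \textsc{Reach} and \textsc{CheckInductive} cases merely fill in steps the paper compresses (``the case of \textsc{Reach} is similar'', ``the soundness of \recmc easily follows''), so there is no substantive difference in approach.
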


$\bndsafety$ is complete relative to an oracle for satisfiability modulo
$\thry$.  Even though the number of reachable states of a procedure is unbounded
in general, the number of reachability facts inferred by \bndsafety is finite.
This is because a reachability fact corresponds to a path (see \textsc{Reach})
and given a bound on the call-stack, the number of such facts is bounded. This
further bounds the number of queries that can be created.

\begin{theorem}
\label{thm:termination}
Given an oracle for $\thry$, $\bndsafety(\Prg, \varphi, n,
\emptyset, \emptyset)$ terminates.
\end{theorem}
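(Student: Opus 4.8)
The plan is to bound the number of applications of each inference rule of $\bndsafety$ and conclude that no run is infinite. The terminal rules \textsc{Unsafe}/\textsc{Safe} and \textsc{Init} fire at most once, and the only rule that enlarges $\nodes$ is \textsc{Query}, which acts on a query of bound $b$ and creates one of bound $b-1$. So it suffices to bound the number of \textsc{Reach}, \textsc{Sum}, and \textsc{Query} steps, and I would do this with a lexicographic argument whose outer component is driven by the finiteness of reachability facts.

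First I would prove the finiteness lemma announced in the text: for every procedure $P$ and bound $b$, the set of formulas ever added as reachability facts for $\langle P,b\rangle$ is finite. This goes by induction on $b$. The fact added by \textsc{Reach} has the form $\exists \locals{P} \st \sem{\pi}_\rho^{b-1}$ for some $\pi \in \paths{P}$, and $\paths{P}$ is finite. At $b=0$ the environment $U_\rho^{-1}$ maps every $\sum{Q}$ to $\bot$, so each such fact depends only on $\pi$; at $b\ge 1$ the induction hypothesis says every $\rho(Q,b')$ with $b'\le b-1$ is drawn from a finite pool, so $U_\rho^{b-1}$ ranges over finitely many environments and hence $\sem{\pi}_\rho^{b-1}$ over finitely many formulas. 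Since \textsc{Reach} only grows $\rho$ as a set, it adds a genuinely new fact only finitely often; I will call a maximal run-segment between two such new-fact additions an \emph{epoch}, so a run has finitely many epochs and within an epoch $\rho$ is constant.

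The heart of the proof is to show each epoch is finite. Fix an epoch: then $\rho$ is constant, the under-approximations $\sem{\cdot}_\rho^{b}$ are fixed, and (as \textsc{Sum} only grows $\sigma$) the over-approximations $\sem{\cdot}_\sigma^{b}$ monotonically strengthen. I would argue that for a fixed query $\langle P,\varphi,b\rangle$ and a fixed path $\pi = \pi_u \land \sum{R}(\vec{a}) \land \pi_v$, the rule \textsc{Query} fires via $\pi$ at most once in the epoch. While the first sub-query $\psi$ persists in $\nodes$, any later candidate $\psi'$ produced via $\pi$ satisfies $\psi' \subseteq \psi$ (a stronger $\sigma$ yields a smaller projection), so $\psi'$ overlaps $\psi$ and the disjointness side-condition $\models \psi' \implies \neg\psi$ fails. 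Once $\psi$ is removed it is removed either by \textsc{Sum} on $R$, whose summary fact excludes exactly the projection $\psi$, so that the over-approximation of $\pi$ now entails $\neg\varphi$ and, by monotonicity, keeps doing so; or by a witnessing \textsc{Reach}, which makes the under-approximation of $\pi$ satisfiable with $\varphi$. In either case a premise of \textsc{Query} for $\pi$ fails permanently within the epoch. Given this ``at most once per path'' bound and the finiteness of $\paths{P}$, a downward induction on $b$ (from $n$, where only finitely many queries are present at the start of the epoch, down to $0$, where \textsc{Query} cannot fire because bound $-1$ maps everything to $\bot$) bounds the number of \textsc{Query} steps and hence the number of queries created; since each \textsc{Sum} and each \textsc{Reach} removes its triggering query, their counts are bounded by the number of queries. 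Thus each epoch is finite, and so is the whole run.

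I expect the main obstacle to be exactly the ``at most once per path per epoch'' claim, and within it the verification that a negatively answered sub-query really blocks the originating path. This rests on the precise form of the \textsc{Query} side-condition: $\psi$ is the projection of $\sem{\pi_u}_\sigma^{b-1} \land \sem{\pi_v}_\rho^{b-1} \land \varphi$ onto the call arguments $\vec{a}$, so the values of $\vec{a}$ consistent with feasibility of $\pi$ are exactly the models of $\psi$; a summary fact implying $\neg\psi$ therefore renders $\sem{\pi_u}_\sigma^{b-1} \land \sem{\sum{R}(\vec{a})}_\sigma^{b-1} \land \sem{\pi_v}_\rho^{b-1} \land \varphi$ unsatisfiable. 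Checking this equivalence carefully, together with the monotonicity bookkeeping that keeps paths blocked (resp.\ witnessed) once they become so within an epoch, is the delicate part; the finiteness lemma and the epoch decomposition are comparatively routine.
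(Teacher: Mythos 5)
Your finiteness argument is essentially sound, and it rests on the same two insights as the paper's own proof: (i) reachability facts are drawn from a finite universe tied to $\paths{P}$ (you prove this by induction on $b$ over the finitely many paths and environments; the paper instead argues that no fact is ever inferred twice and solves an explicit recurrence, Lemma~\ref{lem:finite_reach}), and (ii) while $\rho$ is fixed, \textsc{Query} can fire at most once per path division, because the sub-query $\psi$ is the \emph{exact} projection onto $\vec{a}$: the no-overlap side-condition blocks re-creation while $\psi$ is pending, a \textsc{Sum} answering $\psi$ makes $\sem{\pi_u}_\sigma^{b-1} \land \sem{\sum{R}(\vec{a})}_\sigma^{b-1} \land \sem{\pi_v}_\rho^{b-1} \land \varphi$ permanently unsatisfiable by monotonicity of $\sigma$, and any \textsc{Reach} that removes $\psi$ necessarily adds a genuinely new fact (ending your epoch, since within an epoch the under-approximation premise at creation time would otherwise be contradicted). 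This is the same charging scheme as the paper's Lemma~\ref{lem:finite_query}; your epoch packaging sacrifices only the quantitative bounds, which this theorem does not need (though Theorem~\ref{thm:bool_prog} later reuses them).

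The genuine gap is \emph{progress}. You prove that no run is infinite, but $\bndsafety$ must terminate by \emph{returning} SAFE or UNSAFE --- that is what \recmc consumes and what the co-semidecision corollary requires --- and your argument does not exclude deadlock: a reachable state with $\nodes \neq \emptyset$ in which none of \textsc{Sum}, \textsc{Reach}, \textsc{Query} is applicable. This is not automatic. \textsc{Query} demands a path and a division $\pi = \pi_u \land \sum{R}(\vec{a}) \land \pi_v$ whose mixed approximation is unsatisfiable with $\varphi$ on the $\rho$ side of the call yet satisfiable on the $\sigma$ side; the paper's Lemma~\ref{lem:progress} shows such a division always exists when \textsc{Sum} and \textsc{Reach} both fail, by interpolating between the sequences $\langle \sem{\pi_0}_\sigma^{b-1}, \dots \rangle$ and $\langle \sem{\pi_0}_\rho^{b-1}, \dots \rangle$ one conjunct at a time and taking a maximal switch-over index, which must land on a call literal; it also shows the no-overlap side-condition can always be met by working on a query of smallest bound. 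Finally, once $\nodes = \emptyset$ you still need that \textsc{Safe} or \textsc{Unsafe} actually applies, which follows from the paper's Lemma~\ref{lem:relevant_facts} (queries are removed only when genuinely answered, so the initial query $\node{M}{\neg\varphi_{\mathit{safe}}}{n}$ was resolved one way or the other). Without these steps you have strong normalization of the rule system, but not termination of the procedure in the sense the paper uses it.
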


As a corollary of Theorem~\ref{thm:termination}, \recmc is a
co-semidecision procedure for safety, \ie \recmc is guaranteed to
find a counterexample if one exists. In contrast, the closest related algorithm
GPDR~\cite{gpdr} is not a co-semidecision procedure (see Appendix).
Finally, for Boolean Programs $\recmc$ is a complete decision
procedure. Unlike the general case, the number of reachable states of a Boolean
Program, and hence the number of reachability facts, is finite and independent
of the bound on the call-stack. Let $N = |\Prg|$ and $k = \max\{|\formals{P}| \mid P \in
\Prg\}$.

\begin{theorem}
\label{thm:bool_prog}
Let $\Prg$ be a Boolean Program. Then $\recmc(\Prg, \varphi)$ terminates in
$O(N^2 \cdot 2^{2k})$-many applications of the rules in
Fig.~\ref{fig:basic_algo}.
\end{theorem}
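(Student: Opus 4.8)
My plan is to charge every rule application to strictly monotone progress in finite lattices, exploiting that for a Boolean program each tuple $\formals{P}$ ranges over $\{0,1\}^{|\formals{P}|}$, a set of at most $2^k$ valuations. The essential observation I would stress at the outset is that $\rho$ and $\sigma$ are threaded through \emph{all} calls to $\bndsafety$ (they are returned and fed back in the main loop), so the approximations they induce are monotone over the entire run of $\recmc$, not merely within one iteration. Throughout, I identify a formula over $\formals{P}$ with the subset of $\{0,1\}^{|\formals{P}|}$ it denotes.

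First I would record two monotonicity invariants: for every pair $\langle P,b\rangle$, the under-approximation $U_\rho^b(\sum{P})$ is non-decreasing and the over-approximation $O_\sigma^b(\sum{P})$ is non-increasing. I would then observe that a \emph{live} query $\langle P,\varphi,b\rangle$ necessarily satisfies $U_\rho^b(\sum{P})\cap\varphi=\emptyset$ and $O_\sigma^b(\sum{P})\cap\varphi\neq\emptyset$ (otherwise the removal conditions of \textsc{Reach} or \textsc{Sum} would already have discarded it). Consequently the fact $\psi$ produced by \textsc{Reach} contains a $\varphi$-state and hence adds at least one new element to $U_\rho^b(\sum{P})$, while the fact produced by \textsc{Sum} implies $\neg\varphi$ and hence deletes at least the $\varphi$-states from $O_\sigma^b(\sum{P})$. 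Each application therefore makes strict progress in a lattice of size at most $2^{|\formals{P}|}\le 2^k$, so each $\langle P,b\rangle$ admits at most $2^k$ applications of \textsc{Reach} and at most $2^k$ of \textsc{Sum}.

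Next I would bound the number of pairs $\langle P,b\rangle$ ever touched. The product lattice of all procedure interpretations has height at most $N\cdot 2^k$, so Kleene iteration reaches the least fixed point within $N\cdot 2^k$ steps; equivalently, the bounded semantics $\sem{P}^b$ coincides with $\sem{P}$ once $b\ge b^{*}$ with $b^{*}=O(N\cdot 2^k)$. Hence either a shortest counterexample surfaces at call-stack depth at most $b^{*}$, or the summaries stabilise and \textsc{CheckInductive} succeeds; in both cases the outer bound satisfies $n=O(N\cdot 2^k)$, so the bounds range over $O(N\cdot 2^k)$ values and only $O(N^{2}\cdot 2^{k})$ pairs $\langle P,b\rangle$ are relevant. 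Multiplying by the per-pair bound of $2^k$ yields $O(N^{2}\cdot 2^{2k})$ applications of \textsc{Reach} and of \textsc{Sum}.

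The main obstacle is \textsc{Query}, since it changes neither $\rho$ nor $\sigma$. Here I would lean on the disjointness side-condition: at any instant the live queries at a fixed $\langle R,c\rangle$ have pairwise unsatisfiable formulas, hence probe disjoint non-empty regions of a state space of size at most $2^k$, so at most $2^k$ are simultaneously live. To bound the \emph{total} number of creations despite removal-and-recreation, I would charge each \textsc{Query} to the \textsc{Reach} or \textsc{Sum} that eventually removes the created node, arguing by monotonicity that a region resolved positively (it enters $U_\rho^c(\sum{R})$) or negatively (it leaves $O_\sigma^c(\sum{R})$) is never probed again at that bound; thus the total number of query creations is dominated by the total monotone progress already bounded above, again $O(N^{2}\cdot 2^{2k})$. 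Adding the counts for all five rules, together with the single \textsc{Init} and the terminal \textsc{Unsafe}/\textsc{Safe} application, gives the claimed $O(N^{2}\cdot 2^{2k})$ bound. The two points I expect to defend most carefully are the persistence of the monotonicity invariants across main-loop iterations and the charging scheme making \textsc{Query} accountable to $\rho$- and $\sigma$-progress.
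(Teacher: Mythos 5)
Your per-pair accounting is sound and close in spirit to the paper's: the strict-progress argument (each \textsc{Reach} must add a state to $U_\rho^b(\sum{P})$ and each \textsc{Sum} must delete a state from $O_\sigma^b(\sum{P})$, because a live query satisfies the two non-answerability invariants) is exactly the paper's Pending/Answered Queries lemmas specialized to the Boolean lattice, and charging \textsc{Query} applications to this progress is how the paper's $O(2^k)$-queries-per-$\langle P,b\rangle$ claim is justified as well. One local imprecision: after a \textsc{Reach}-removal only the witnessed state, not the whole query region, is permanently resolved (the rule removes any query whose formula merely intersects the new fact), so ``never probed again at that bound'' is too strong; the count survives if you charge each removal to the newly covered (resp.\ newly excluded) state rather than to the region.

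The genuine gap is in your bound on the number of outer iterations in the SAFE case. You derive $n = O(N\cdot 2^k)$ from Kleene stabilization of the \emph{true} bounded semantics, $\sem{P}^b = \sem{P}$ for $b \ge b^* = N\cdot 2^k$, and then assert that ``the summaries stabilise and \textsc{CheckInductive} succeeds.'' That is a non sequitur. \textsc{CheckInductive} tests whether the \emph{inferred} facts at the frontier are mutually inductive, i.e., whether $\models \sem{\body{P}}_\sigma^{n} \implies \delta$ for each $\delta \in \sigma(P,n)$; but each such $\delta$ was only guaranteed (by \textsc{Sum} or by propagation) to be implied by the body evaluated under the callees' \emph{level-$(n{-}1)$} facts, which are in general strictly stronger than their level-$n$ facts. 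These facts are interpolants, i.e., essentially arbitrary over-approximations, and nothing about stabilization of $\sem{P}^b$ forces them to stabilize or to become inductive; \textsc{CheckInductive} can fail at bounds $n \ge b^*$, so your case split does not bound $n$. The paper closes this hole with an argument about the algorithm's own artifacts rather than the semantics: the total number of (progress-making) summary facts per procedure is $O(2^k)$, and since $O_\sigma^b$ is monotone in $b$ over a cumulative state space of size $N\cdot 2^k$, the chain of frame environments can make strict progress only $O(N \cdot 2^k)$ times---and whenever two consecutive frames coincide, every frontier fact is already implied by the body under the frontier environment, so \textsc{CheckInductive} must succeed. You need this frame-chain (IC3-style) argument, or an equivalent statement about the inferred facts, to justify your count of $O(N^2 \cdot 2^k)$ relevant pairs $\langle P,b\rangle$; the Kleene-height argument about $\sem{P}^b$ cannot do that job.
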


Note that due to the iterative deepening strategy of $\recmc$, the
complexity is quadratic in the number of procedures (and not linear as in~\cite{bebop}).
In contrast, other SMT-based algorithms, such as \textsc{Whale}~\cite{whale},
are worst-case exponential in the number of states of a Boolean Program.

In summary, \recmc checks safety of a recursive program by inferring
the necessary under- and over-approximations of procedure semantics
and using them to analyze procedures individually.


\newcommand{\qfml}{\psi}
\newcommand{\qffml}{\psi_f}
\renewcommand{\matrix}{\psi_m}

\section{Model Based Projection}
\label{sec:lazy_qe}
$\recmc$, as presented in Section~\ref{sec:algo}, can be used as-is,
when $\thry$ is Linear Arithmetic. But, note that the rules \textsc{Reach}
and \textsc{Query} introduce existential quantifiers in reachability
facts and queries. Unless eliminated, these quantifiers accumulate and
the size of the formulas grows exponentially in the bound on the
call-stack.
Using quantifier elimination (QE) to eliminate the quantifiers is expensive.
Instead, we suggest an alternative
approach that under-approximates existential quantification with
quantifier-free formulas \emph{lazily} and efficiently. We first introduce
a model-based under-approximation of QE, which we call a \emph{Model Based Projection}
(MBP). Second, we give an efficient (linear in the size of formulas
involved) MBP procedure for Linear Rational Arithmetic (LRA). Due to
space limitations, MBP for Linear Integer Arithmetic (LIA) is described in the
Appendix. Finally, we show a modified version of $\bndsafety$
that uses MBP instead of existential quantification and show that it
is sound and terminating.

\paragraph{\textbf{Model Based Projection} $($MBP$)$.}
Let $\lambda(\vec{y})$ be the existentially quantified formula
$\exists \vec{x} \st \lambda_m(\vec{x},\vec{y})$ where $\lambda_m$ is quantifier free. A function $\projf{\lambda}$
from models (modulo $\thry$) of $\lambda_m$ to quantifier-free formulas over $\vec{y}$ is a \emph{Model Based
Projection} (for $\lambda$) iff it has a finite image, $\lambda \equiv \bigor_{M
\models \lambda_m} \projf{\lambda}(M)$, and for every model $M$ of $\lambda_m$, $M
\models \projf{\lambda}(M)$.

In other words, $\projf{\lambda}$ covers the space of all models of $\lambda_m(\vec{x},\vec{y})$ by a finite
set of quantifier-free formulas over $\vec{y}$. MBP exists for any theory
that admits quantifier elimination, because one can first obtain an equivalent
quantifier-free formula and map every model to it.

\paragraph{\textbf{MBP for Linear Rational Arithmetic}.} We begin with a brief
overview of Loos-Weispfenning (LW) method~\cite{lw} for quantifier elimination in LRA.
We borrow our presentation from Nipkow~\cite{nipkow} to which we refer the
reader for more details. Let $\lambda(\vec{y}) = \exists \vec{x} \st \lambda_m
(\vec{x},\vec{y})$ as above. Without loss of generality, assume that $\vec{x}$
is singleton, $\lambda_m$ is in Negation Normal Form, and $x$ only
appears in the literals of the form $\ell < x$, $x < u$, and $x=e$,
where $\ell$, $u$, and $e$ are $x$-free. Let $\lits{\lambda}$ denote
the literals of $\lambda$. The LW-method states that
\begin{equation}
  \label{eq:lw}
  \exists x \st \lambda_m(x) ~\equiv~ \left( \bigor_{(x=e) \in
    \lits{\lambda}} \lambda_m[e] ~\lor~ \bigor_{(\ell < x) \in
    \lits{\lambda}} \lambda_m [\ell + \epsilon] ~\lor~~ \lambda_m[-\infty] \right)
\end{equation}
where $\lambda_m[\cdot]$ denotes a \emph{virtual substitution} for the literals containing $x$.
Intuitively, $\lambda_m[e]$ covers the case when a literal $(x=e)$ is true.
Otherwise, the set of $\ell$'s in the literals $(\ell < x)$ identify intervals
in which $x$ can lie which are covered by the remaining substitutions.
We omit the details of the substitution and instead
illustrate it on an example. Let $\lambda_m$ be $(x=e \land \phi_1) \lor
(\ell < x \land x < u) \lor (x<u \land \phi_2)$, where
$\ell,e,u,\phi_1,\phi_2$ are $x$-free. Then,
\begin{align*}
     \exists x \st \lambda_m(x) &\equiv \lambda_m[e] \lor \lambda_m[\ell + \epsilon] \lor \lambda_m[-\infty] \\
            &\equiv \big( \phi_1 \lor \left( \ell < e \land e < u \right) \lor
            \left( e<u \land \phi_2 \right) \big) \lor \big( \ell < u \lor
            (\ell<u \land \phi_2) \big) \lor \phi_2 \\
            &\equiv \phi_1 \lor (\ell < u) \lor \phi_2
\end{align*}

We now define an MBP $\lraprojf{\lambda}$ for LRA as a map from models
of $\lambda_m$ to disjuncts in~\eqref{eq:lw}. Given $M \models
\lambda_m$, $\lraprojf{\lambda}$ picks a disjunct that covers $M$
based on values of the literals of the form $x=e$ and $\ell < x$ in $M$. Ties are
broken by a syntactic ordering on terms (\eg when $M \models \ell'
= \ell$ for two \mbox{literals $\ell < x$ and $\ell' < x$).}
\[
    \lraprojf{\lambda}(M) = \begin{cases}
                    \lambda_m[e], & \text{if } (x=e)\in \lits{\lambda} \land M \models x=e \\
                    \lambda_m[\ell+\epsilon], & \text{else if } (\ell < x) \in \lits{\lambda} \land M \models \ell<x \land{} \\
                                & \forall (\ell'<x) \in \lits{\lambda} \st M \models \left( (\ell' < x) \implies (\ell' \le \ell) \right) \\
                    \lambda_m[-\infty], & \text{otherwise}
                    \end{cases}
\]

\begin{theorem}
\label{thm:model_based_proj}
$\lraprojf{\lambda}$ is a Model Based Projection.
\end{theorem}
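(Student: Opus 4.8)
The plan is to verify the three defining properties of a Model Based Projection directly from the definition given in the excerpt: (i) finite image, (ii) the equivalence $\lambda \equiv \bigor_{M \models \lambda_m} \lraprojf{\lambda}(M)$, and (iii) $M \models \lraprojf{\lambda}(M)$ for every model $M$ of $\lambda_m$. The key observation is that $\lraprojf{\lambda}$ is defined as a map into the \emph{disjuncts} appearing in the LW-identity~\eqref{eq:lw}, so each property reduces to a fact about that identity.

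First I would dispatch the finite-image property. By construction $\lraprojf{\lambda}(M)$ is always one of the formulas $\lambda_m[e]$ (for some literal $(x=e) \in \lits{\lambda}$), $\lambda_m[\ell+\epsilon]$ (for some literal $(\ell < x) \in \lits{\lambda}$), or $\lambda_m[-\infty]$. Since $\lits{\lambda}$ is finite, there are only finitely many such substitution instances, so the image is finite regardless of how many models $\lambda_m$ has. Next, for the equivalence, I would argue both inclusions. The direction $\bigor_M \lraprojf{\lambda}(M) \implies \lambda$ follows because every formula in the image of $\lraprojf{\lambda}$ is one of the disjuncts on the right-hand side of~\eqref{eq:lw}, and by the LW-method that entire disjunction is equivalent to $\exists x \st \lambda_m = \lambda$; hence each individual disjunct implies $\lambda$. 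The reverse direction, $\lambda \implies \bigor_M \lraprojf{\lambda}(M)$, is where property (iii) does the work: I would show that $\lraprojf{\lambda}$ is \emph{total} on models of $\lambda_m$ and satisfies $M \models \lraprojf{\lambda}(M)$, so every witness $M$ to $\lambda$ is covered by its own image formula.

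The crux is therefore property (iii), and I would prove it by a case analysis matching the three clauses in the definition of $\lraprojf{\lambda}(M)$. The first two cases are immediate by design, since the guards explicitly force $M$ into the intended disjunct, but they still require checking that the virtual substitution is \emph{sound at the witnessed value}: if $M \models (x=e)$ then $M \models \lambda_m[e]$ because substituting $e$ agrees with the value $M$ assigns to $x$; and if the selected $\ell$ is the largest lower bound satisfied by $M$, then $M(x)$ lies in the interval $(\ell, \cdot)$ picked out by $\ell+\epsilon$, so $M \models \lambda_m[\ell+\epsilon]$. The main obstacle is the third, ``otherwise'' case, corresponding to $\lambda_m[-\infty]$: here I must show that when $M$ satisfies no equality literal $x=e$ and no lower-bound literal $\ell < x$ (so $x$ is effectively unbounded below relative to the literals that hold), the value $M(x)$ can be pushed to $-\infty$ without changing the truth of $\lambda_m$, i.e.\ $M \models \lambda_m[-\infty]$. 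This requires invoking the precise semantics of the virtual substitution $\lambda_m[-\infty]$ from the LW-method (which replaces each literal containing $x$ by its limiting truth value as $x \to -\infty$) and arguing that, under the stated guard, every literal of $\lambda_m$ evaluates the same at $M(x)$ as in this limit.

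Since the detailed mechanics of the virtual substitutions $\lambda_m[e]$, $\lambda_m[\ell+\epsilon]$, and $\lambda_m[-\infty]$ are suppressed in the presentation (deferred to Nipkow~\cite{nipkow}), I expect the genuine content of the proof to be exactly this per-literal analysis: confirming that the syntactic substitution preserves satisfaction at the specific model $M$ that selected it. The finiteness and the ``covering implies $\lambda$'' direction are essentially bookkeeping on top of the already-stated LW-identity~\eqref{eq:lw}; the tie-breaking by syntactic ordering on terms matters only to make $\lraprojf{\lambda}$ well-defined as a \emph{function} and does not affect the three MBP properties, so I would note it in passing rather than belabor it.
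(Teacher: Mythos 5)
Your overall strategy coincides with the paper's: dispatch finite image by observing that the image of $\lraprojf{\lambda}$ lies among the finitely many disjuncts of~\eqref{eq:lw}, reduce the covering equivalence to the per-model property $M \models \lraprojf{\lambda}(M)$, and establish that property by a case analysis on the three clauses of the definition, carried out literal by literal. However, one step of your crux paragraph is false as literally stated, and its repair is exactly the lemma-level move the paper makes explicit. In the $-\infty$ case (and likewise in the $\ell+\epsilon$ case) it is \emph{not} true that ``every literal of $\lambda_m$ evaluates the same at $M(x)$ as in this limit'': the virtual substitution sends $x<u$ to $\top$ in the $-\infty$ case even when $M \not\models x<u$, and sends $x<u$ to $\ell<u$ in the $\ell+\epsilon$ case, which can hold in $M$ (say $M[\ell]<M[u]\le M[x]$) while $x<u$ fails. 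What actually holds is only the one-directional statement $M \models \left(\mu \implies \mathit{Sub}_t(\mu)\right)$ for each literal $\mu$ of $\lambda_m$ containing $x$, where $\mathit{Sub}_t$ denotes the virtual substitution of the selected term $t$. This weaker statement suffices because $\lambda_m$ is in negation normal form, hence monotone in its literals: replacing every literal by one that is at least as true in $M$ cannot turn $\lambda_m$ from true to false, so $M \models \lambda_m$ yields $M \models \lambda_m[t]$. Your earlier formulation (``the substitution preserves satisfaction at the specific model $M$'') is the correct target; just make sure the per-literal claim is an implication rather than an equivalence, and that NNF monotonicity is invoked to lift it from literals to $\lambda_m$ --- that lifting is the pivot of the paper's proof, and without it the ``otherwise'' and $\ell+\epsilon$ cases do not go through.
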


Note that $\lraprojf{\lambda}$ is linear in the size of
$\lambda$. An MBP for LIA can be defined similarly (see Appendix) based on
Cooper's method~\cite{cooper}.

\paragraph{\textbf{Bounded Safety with MBP}.} 
Intuitively, each quantifier-free
formula in the image of $\projf{\lambda}$ under-approximates $\lambda$. As
above, we use $\lambda_m$ for the quantifier-free matrix of $\lambda$. We modify
the side-condition $\psi=\lambda$ of \textsc{Reach} and \textsc{Query} to use
quantifier-free under-approximations as follows: (i) for \textsc{Reach},
the new side-condition is $\psi = \projf{\lambda}(M)$ where $M \models
\lambda_m \land \varphi$, (ii) for \textsc{Query}, the new side-condition
is $\psi = \projf{\lambda}(M)$ where $M \models \lambda_m \land
\sem{\sum{R}(a)}^{b-1}_{\sigma}$.
Note that to avoid redundant applications of the rules, we require $M$ to satisfy a formula
stronger than $\lambda_m$.
Intuitively, (i) ensures that the newly inferred reachability fact answers the
current query and (ii) ensures that the new query cannot be immediately answered
by known facts. In both cases, the required model $M$ can be obtained as a side-effect
of discharging the premises of the rules. Soundness of \bndsafety is unaffected and
termination of $\bndsafety$ follows from the image-finiteness of $\projf{\lambda}$.
\begin{theorem}
\label{thm:termination_preserving}
Assuming an oracle and an MBP for $\thry$, \bdsafety is sound and terminating
with the modified rules.
\end{theorem}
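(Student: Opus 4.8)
The plan is to prove Theorem~\ref{thm:termination_preserving} by reducing it to the two properties already established for the unmodified \bndsafety, namely soundness (Theorem~\ref{thm:soundness}) and termination (Theorem~\ref{thm:termination}), and then showing that replacing exact existential quantification by an MBP disjunct preserves both. Recall that the only change is in the side-conditions of \textsc{Reach} and \textsc{Query}: where the original rules set $\psi = \lambda$ for $\lambda = \exists \vec{x} \st \lambda_m$, the modified rules set $\psi = \projf{\lambda}(M)$ for a suitable model $M \models \lambda_m \land \chi$ (with $\chi$ being $\varphi$ for \textsc{Reach} and $\sem{\sum{R}(\vec{a})}^{b-1}_{\sigma}$ for \textsc{Query}). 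By the defining properties of an MBP from the previous subsection, $\projf{\lambda}$ has finite image and $\lambda \equiv \bigor_{M \models \lambda_m} \projf{\lambda}(M)$, so each disjunct satisfies $\models \projf{\lambda}(M) \implies \lambda$; moreover $M \models \projf{\lambda}(M)$.

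For \textbf{soundness}, I would revisit exactly those cases of the case analysis on inference rules that establish Theorem~\ref{thm:soundness} for \textsc{Reach} and \textsc{Query}, and check that each goes through with $\psi = \projf{\lambda}(M)$ in place of $\psi = \lambda$. For \textsc{Reach}, soundness requires that the newly asserted $\psi$ is a genuine reachability fact, \ie $I(\psi) \subseteq \sem{P}^b_I$; since $\models \projf{\lambda}(M) \implies \lambda$ and $\lambda$ under-approximates the bounded semantics (as in the original argument), the under-approximation only becomes stronger, so the reachability-fact invariant is maintained. I also need that $\psi$ actually answers the current query, which is ensured because the chosen $M$ additionally satisfies $\varphi$ and $M \models \projf{\lambda}(M)$, so $\psi \land \varphi$ is satisfiable. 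For \textsc{Query}, the asserted object is a query rather than a fact, and the relevant invariant is that the query is a legitimate under-approximation of the residual obligation; again $\models \projf{\lambda}(M) \implies \lambda$ makes the new query logically stronger than the original, and the extra requirement $M \models \sem{\sum{R}(\vec{a})}^{b-1}_{\sigma}$ guarantees the new query cannot be immediately discharged by existing summary facts, which is what the side-condition on non-overlap in the original rule guaranteed. The other rules (\textsc{Sum}, \textsc{Init}, \textsc{Unsafe}, \textsc{Safe}) are syntactically unchanged, so their soundness cases carry over verbatim.

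For \textbf{termination}, I would mirror the proof of Theorem~\ref{thm:termination}, whose key point is that only finitely many distinct reachability facts and hence finitely many queries can be generated. The argument there counts facts by the finitely many paths $\pi \in \paths{P}$; here, for each such $\pi$ and each bound $b$, the quantified formula $\lambda = \exists \locals{P} \st \sem{\pi}^{b-1}_\rho$ (for \textsc{Reach}) or the corresponding $\lambda$ in \textsc{Query} is fixed, and the set of possible values of $\psi$ is now drawn from the \emph{finite image} of $\projf{\lambda}$ rather than being the single formula $\lambda$. Thus the number of candidate new facts and queries is multiplied by a finite factor (the image size) but remains finite, so the well-foundedness measure driving the original termination argument still decreases and bottoms out. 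I would phrase this as: the image-finiteness of $\projf{\lambda}$ is precisely the replacement for the ``single fact per path'' bound used in Theorem~\ref{thm:termination}, and no other part of that measure is affected.

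The main obstacle I anticipate is not soundness but the bookkeeping in the termination argument: one must be careful that the finitely many MBP disjuncts are taken over a \emph{fixed} $\lambda$, whereas $\lambda$ itself depends on the current maps $\rho$ and $\sigma$, which change as the algorithm runs. I would handle this by observing that the relevant maps range over a finite set of possible values at each bound (this is already what Theorem~\ref{thm:termination} exploits to bound the number of facts), so there are only finitely many distinct $\lambda$'s that can ever arise, each contributing a finite image under its MBP; the product is still finite. Making this dependency explicit — that finiteness of the fact-set feeds back to finiteness of the set of $\lambda$'s, which via image-finiteness feeds back to finiteness of the new facts — is the step that needs the most care, and I would present it as a straightforward induction on the bound $b$ exactly paralleling the inductive structure of the bounded-semantics definition and of the original termination proof.
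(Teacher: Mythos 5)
Your proposal is correct and follows essentially the same route as the paper's own proof: soundness is preserved because \textsc{Sum} is untouched and MBP only strengthens the under-approximations produced by \textsc{Reach} (and the modified side-conditions still guarantee the new fact answers the query, resp.\ the new query is not already answered, \ie the analogues of the Answered/Pending Queries lemmas), while termination follows because the finite image of the MBP merely multiplies the counting bounds in the finite-reachability-facts and finite-queries lemmas by a constant factor $d$, leaving the recurrence-based finiteness argument intact. Your worry about $\lambda$ depending on the evolving maps $\rho,\sigma$ is resolved exactly as in the paper, where the recurrences are already indexed by the bound $b$ and account for updates to the environments at bound $b-1$.
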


Thus, \bndsafety with a linear-time MBP (such as $\lraprojf{\lambda}$) keeps the
size of the formulas small by efficiently inferring only the necessary
under-approximations of the quantified formulas.


\section{Implementation and Experiments}
\label{sec:results}
We have implemented \textsc{RecMC} for analyzing C programs as part of
the tool \spacer. The back-end is based on Z3~\cite{z3_code} which is used
for SMT-solving and interpolation. It supports propositional logic,
linear arithmetic, and bit-vectors (via bit-blasting). The front-end
is based on \textsc{UFO}~\cite{ufo_svcomp}. It converts C programs to
the Horn-SMT format of Z3, which corresponds to the logical program
representation of Section~\ref{sec:prelims}. The
implementation and benchmarks are available
online\footnote{\url{http://www.cs.cmu.edu/~akomurav/projects/spacer/home.html}.}.

We evaluated \spacer on two sets of benchmarks. The first set contains
2,908 Boolean Programs obtained from the SLAM
toolkit\footnote{\url{https://svn.sosy-lab.org/software/sv-benchmarks/trunk/clauses/BOOL/slam.zip}}. The second
contains 799 C programs from the Software Verification
Competition (SVCOMP) 2014~\cite{svcomp14}. We call this set \textsc{Svcomp-1}. We also
evaluated on two variants of \textsc{Svcomp-1}, which we call \textsc{Svcomp-2} and
\textsc{Svcomp-3}, obtained by factoring out parts of the program into
procedures and introducing more modularity. We compared \spacer against the
implementation of GPDR in Z3. We used a time limit of 30 minutes and a memory
limit of 16GB, on an Ubuntu machine with a 2.2 GHz AMD Opteron(TM) Processor 6174
and 516GB RAM. The results are summarized in Fig.~\ref{fig:num_solved}. 
Since there are programs verified by only one of the tools,
Fig.~\ref{fig:num_solved} also reports the number of programs verified
by at least one, \ie the Virtual Best Solver (VBS).



\begin{figure}[t]
\centering
\scriptsize
\begin{tabular}{|l|c|c|c|c|c|c|c|c|}
\hline
    & \multicolumn{2}{c|}{\textsc{Slam}}
    & \multicolumn{2}{c|}{\textsc{Svcomp-1}}
    & \multicolumn{2}{c|}{\textsc{Svcomp-2}}
    & \multicolumn{2}{c|}{\textsc{Svcomp-3}} \\
\cline{2-9}
        & SAFE & UNSAFE & SAFE  & UNSAFE & SAFE & UNSAFE & SAFE & UNSAFE \\
\hline
\spacer & 1,721 & 985    & 249   & 509    & 213  & 497    & 234  & 482    \\
Z3      & 1,722 & 997    & 245   & 509    & 208  & 493    & 234  & 477    \\
VBS     & 1,727 & 998    & 252   & 509    & 225  & 500    & 240  & 482    \\
\hline
\end{tabular}
\caption{Number of programs verified by \spacer, Z3 and the Virtual Best Solver.}
\label{fig:num_solved}
\end{figure}

\subheading{Boolean Program Benchmarks} On most of the SLAM benchmarks, the runtimes of \spacer and Z3 are
similar (within 2 minutes). We then evaluated on a Boolean Program
from~\cite{bebop} in which the size of the call-tree grows exponentially in the number of
procedures. As Fig.~\ref{fig:bebop_plot} shows, \spacer handles the increasing
complexity in the example significantly better than Z3.

\begin{figure}[t]
\begin{subfigure}[b]{.5\textwidth}
\centering
\includegraphics[scale=.6]{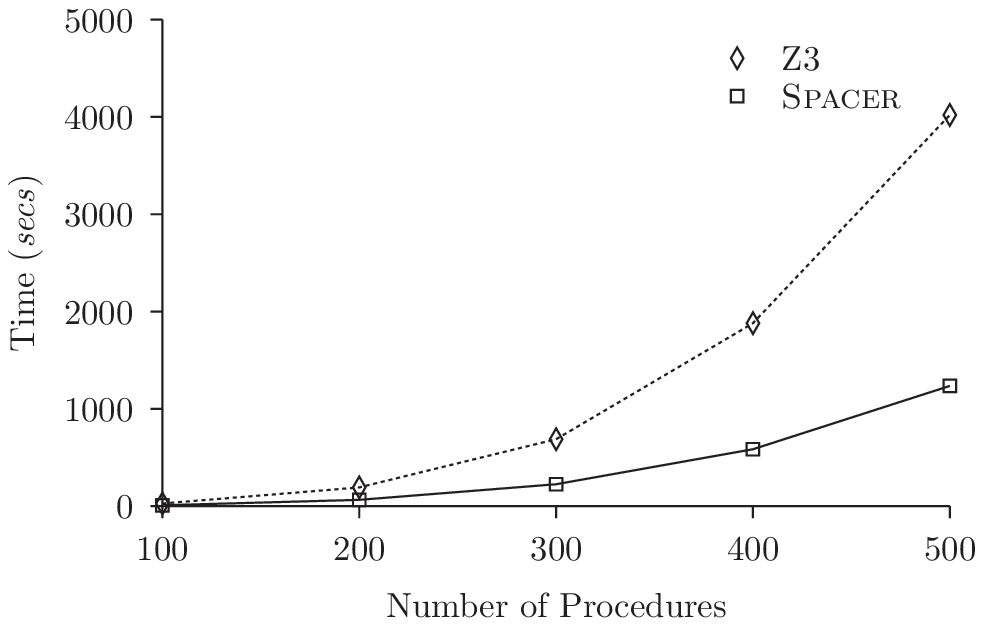}
\caption{}
\label{fig:bebop_plot}
\end{subfigure}
\begin{subfigure}[b]{.5\textwidth}
\centering
\includegraphics[scale=.6]{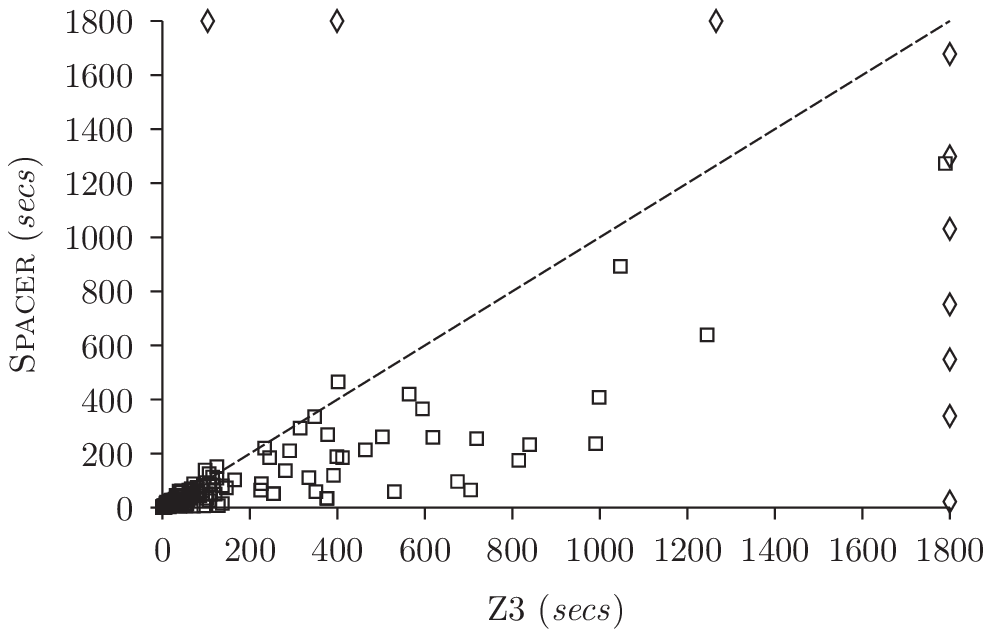}
\caption{}
\label{fig:linear_plot}
\end{subfigure}
\caption{\spacer vs. Z3 for (a) \textsc{Bebop} example and (b) \textsc{Svcomp-1} benchmarks.}
\label{}
\end{figure}

\begin{figure}[t]
\begin{subfigure}[b]{.5\textwidth}
\centering
\includegraphics[scale=.6]{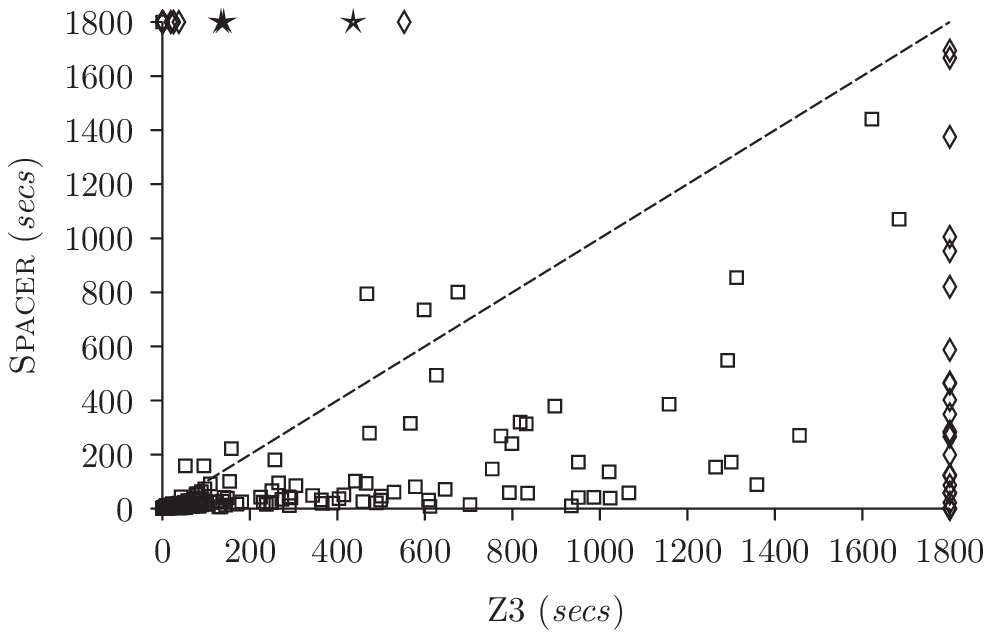}
\caption{}
\label{fig:spacer_abs_plot}
\end{subfigure}
\begin{subfigure}[b]{.5\textwidth}
\centering
\includegraphics[scale=.6]{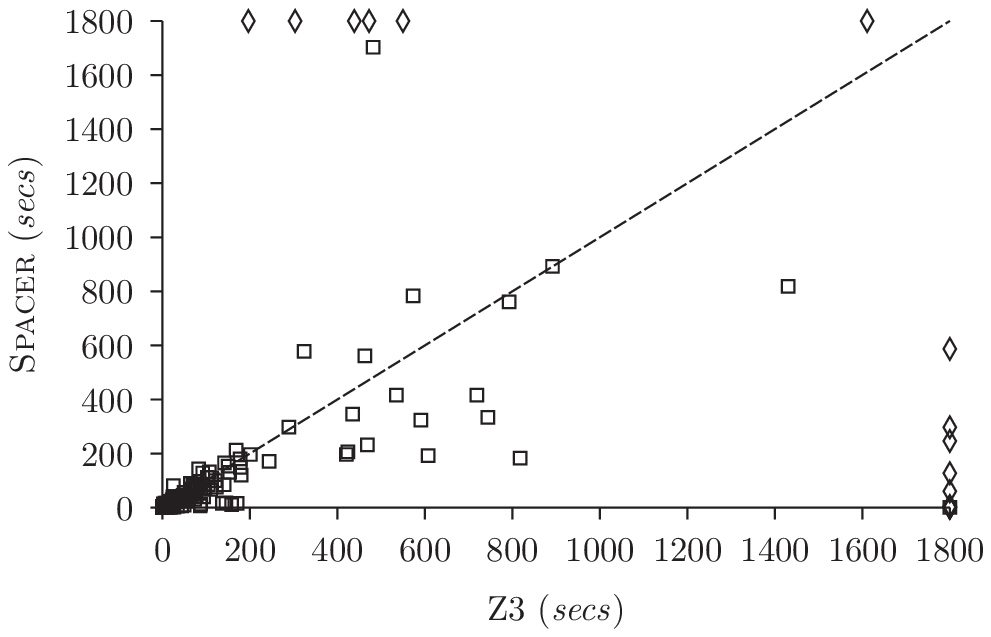}
\caption{}
\label{fig:outlined_plot}
\end{subfigure}
\caption{\spacer vs. Z3 for the benchmarks (a) \textsc{Svcomp-2} and (b)
\textsc{Svcomp-3}.}
\label{}
\end{figure}

\subheading{SVCOMP 2014 Benchmarks}
Fig.~\ref{fig:linear_plot}, \ref{fig:spacer_abs_plot}
and~\ref{fig:outlined_plot} show the scatter plots for \textsc{Svcomp-1},
\textsc{Svcomp-2} and \textsc{Svcomp-3} benchmarks. A diamond indicates a
time-out and a star indicates a mem-out. The plots show that
\spacer is significantly better on most of the programs. This shows the
practical advantage of the approximations and MBP of \recmc.


\section{Related Work}

There is a large body of work on interprocedural program analysis. It
was pointed out early on that verification of recursive programs is
reducible to the computation of a fixed-point over relations (called
\emph{summaries}) representing the input-output behavior of each
procedure~\cite{ed_paper,sharir81}. Such procedure summaries are called
\emph{partial correctness relations} in~\cite{ed_paper}, and are part
of the \emph{functional approach} of~\cite{sharir81}. Reps, Horwitz, and
Sagiv~\cite{rhs} showed that for a large class of finite
interprocedural dataflow problems the summaries can be computed in time
polynomial in the number of \emph{facts} and procedures. Ball and
Rajamani~\cite{bebop} adapted the RHS algorithm to the verification of
Boolean Programs.
Following the SLAM project, other software model checkers -- such as
\textsc{blast}~\cite{blast02} and \textsc{magic}~\cite{magic04} --
also implemented the CEGAR loop with predicate abstraction. None used
under-approximations of procedure semantics as we do.

Recently, several SMT-based algorithms have been proposed for safety
verification of recursive programs, including
\textsc{Whale}~\cite{whale}, HSF~\cite{hsfc},
Duality~\cite{duality}, Ultimate
Automizer~\cite{DBLP:conf/tacas/HeizmannCDEHLNSP13}, and
Corral~\cite{DBLP:conf/cav/LalQL12}. While these algorithms have been
developed independently, they share a similar structure. They use
SMT-solvers to look for  counterexamples and interpolation to
over-approximate summaries.
%
%
Corral is an exception, which relies on user input and heuristics to supply
the summaries. The algorithms differ in the SMT encoding and the
heuristics used. However, in the worst-case, they completely unroll the
call graph into a tree.

The work closest to ours is Generalized Property Driven Reachability
(GPDR) of Hoder and Bj{\o}rner~\cite{gpdr}. GPDR extends the hardware
model checking algorithm IC3 of Bradley~\cite{ic3} to SMT-supported
theories and recursive programs. Unlike $\recmc$, GPDR does not
maintain reachability facts. In the context of
Fig.~\ref{fig:basic_algo}, this means that $\rho$ is always empty and
there is no \textsc{Reach} rule. Instead, the \textsc{Query} rule is
modified to use a model $M$ that satisfies the premises (instead of our
use of the path $\pi$ when creating a query). Furthermore, the answers to the queries are
cached. In the context of Boolean Programs, this ensures that every
query is asked at most once (and either cached or blocked by a summary
fact). Since there are only finitely many
models, the algorithm always terminates. However, in the case of Linear
Arithmetic, a formula can have infinitely many models and GPDR might
end up applying the \textsc{Query} rule indefinitely. In contrast,
$\recmc$ creates only finitely many queries for a given bound on the call-stack
and is guaranteed to find a counterexample if one exists.

Combination of over- and under-approximations for analysis of
procedural programs has also been explored
in~\cite{DBLP:conf/atva/GurfinkelWC08,smash}. However, our notion of an
under-approximation is very
different. Both~\cite{DBLP:conf/atva/GurfinkelWC08,smash}
under-approximate summaries by \emph{must transitions}. A must
transition is a pair of formulas $\langle \varphi, \psi \rangle$ that
under-approximates the summary of a procedure $P$ iff for every state that
satisfies $\varphi$, $P$ has an execution that ends in a state
satisfying $\psi$. In contrast, our reachability facts are similar to
\emph{summary edges} of RHS~\cite{rhs}.  A reachability fact is a
single formula $\varphi$ such that every satisfying assignment to
$\varphi$ captures a terminating execution of $P$.


\section{Conclusion}
\label{sec:conclusion}

We presented \recmc, a new SMT-based algorithm for model checking
safety properties of recursive programs. For programs and properties
over decidable theories, \recmc is guaranteed to find a counterexample
if one exists. To our knowledge, this is the first SMT-based algorithm
with such a guarantee while being polynomial for Boolean Programs. The
key idea is to use a combination of under- and over-approximations of
the semantics of procedures, avoiding re-exploration of parts of the
state-space. We described an efficient instantiation of \recmc for
Linear Arithmetic (over rationals and integers) by introducing
\emph{Model-Based Projection} to under-approximate the expensive
quantifier elimination.  We have implemented it in our tool
\spacer and shown empirical evidence that it significantly improves
on the state-of-the-art.

In the future, we would like to explore extensions to other
theories. Of particular interest are the theory EUF of uninterpreted
functions with equality and the theory of arrays. The challenge is to
deal with the lack of quantifier elimination. Another direction of interest is
to combine $\recmc$ with \emph{Proof-based
Abstraction}~\cite{pba1,pba2,spacer_cav} to explore a combination of the
approximations of procedure semantics with transition-relation abstraction.


\newpage
\section*{Acknowledgment}
We thank Edmund M. Clarke and Nikolaj Bj{\o}rner for many helpful
discussions. Our definition of MBP is based on the idea of projected
implicants co-developed with Nikolaj. We thank Cesare Tinelli and the
anonymous reviewers for insightful comments. This research was
sponsored by the National Science Foundation grants no.~DMS1068829,
CNS0926181 and CNS0931985, the GSRC under contract no.~1041377, the
Semiconductor Research Corporation under contract no.~2005TJ1366, the
Office of Naval Research under award no.~N000141010188 and the
CMU-Portugal Program.  This material is based upon work funded and
supported by the Department of Defense under Contract
No. FA8721-05-C-0003 with Carnegie Mellon University for the operation
of the Software Engineering Institute, a federally funded research and
development center.
    Any opinions, findings and conclusions or recommendations
    expressed in this material are those of the author(s) and do not
    necessarily reflect the views of the United States Department of
    Defense.
    This material has been approved for public release and unlimited
    distribution.
    DM-0000973. 


\newpage
\appendix
\newcounter{oldtheorem}
\section{Divergence of GPDR for Bounded Call-Stack}
Consider the program $\langle \langle M,L,G \rangle, M \rangle$ with
$M = \langle y_0, y, \sum{M}, \langle x, n \rangle, \body{M}
\rangle$,
$L = \langle n, \langle x,y,i \rangle, \sum{L}, \langle x_0,y_0,i_0 \rangle,
\body{L} \rangle$,
$G = \langle x_0, x_1, \sum{G}, \emptyset, \body{G} \rangle$, where

\begin{align*}
    \body{M} &= \sum{L}(x,y_0,n,n) \land \sum{G}(x,y) \land n>0 \\
    \body{L} &= \left( i=0 \land x=0 \land y=0 \right) \lor\\
             &  \quad\left( \sum{L}(x_0,y_0,i_0,n) \land x=x_0+1 \land y=y_0+1 \land
               i=i_0+1 \land i>0 \right) \\
    \body{G} &= (x=x_0+1)
\end{align*}

The GPDR~\cite{gpdr} algorithm can be shown to diverge when checking $M
\models_2 y_0 \le y$, for \eg by inferring the diverging sequence of over-approximations of $\sem{L}^1$:
\[
(x<2 \implies y\le 1), (x<3 \implies y\le2), \dots
\]
We also observed this behavior experimentally (Z3 revision
\texttt{d548c51} at \url{http://z3.codeplex.com}). The Horn-SMT file
for the example is available at
\[
 \text{\url{http://www.cs.cmu.edu/~akomurav/projects/spacer/gpdr_diverging.smt2}}.
\]


\section{Soundness of \recmc and \bndsafety (Proof of Theorem~\ref{thm:soundness})}
We first restate the theorem.

\setcounter{oldtheorem}{\thetheorem}
\setcounter{theorem}{0}
\begin{theorem}
\recmc and \bndsafety are sound.
\end{theorem}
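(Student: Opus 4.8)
The plan is to establish a single invariant that travels through every rule of \bndsafety and across the iterations of \recmc: \emph{every formula stored in $\sigma$ is a genuine summary fact and every formula stored in $\rho$ is a genuine reachability fact}. Formally, for all $P$ and $b$, if $\delta \in \sigma(P,b)$ then $P \models_b \delta$, and if $\delta \in \rho(P,b)$ then $I(\delta) \subseteq \sem{P}^b_I$ for every model $I$ of $\thry$. Everything else follows from this invariant together with the firing conditions of \textsc{Unsafe} and \textsc{Safe}.

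The first ingredient I would prove is a pair of \emph{approximation lemmas}. Using monotonicity of the bounded semantics in the stack bound ($\sem{Q}^{b'}_I \subseteq \sem{Q}^{b}_I$ whenever $b' \le b$, which itself follows by induction on $b$ from the fact that each $\sum{Q}$ occurs only positively in $\body{Q}$) together with the invariant, I would show that $O_\sigma^{b-1}$ over-approximates every callee's semantics at bound $b-1$ while $U_\rho^{b-1}$ under-approximates it. Since the predicate symbols occur only positively in $\body{P}$ (hence in every $\pi \in \paths{P}$, whose literals over $\sum{Q}$ are therefore positive), substituting an over-approximation yields $\sem{P}^b_I \subseteq I(\exists \locals{P} \st \sem{\body{P}}_\sigma^{b-1})$, and substituting an under-approximation yields $I(\exists \locals{P} \st \sem{\pi}_\rho^{b-1}) \subseteq \sem{P}^b_I$. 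These two containments are the workhorses of the argument.

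With the lemmas in hand, the invariant is preserved by a case analysis on the rules, by induction on the length of a \bndsafety derivation. \textsc{Query} adds no facts, so it is immediate. For \textsc{Reach}, the new $\rho$-entry is exactly $\exists \locals{P} \st \sem{\pi}_\rho^{b-1}$, which the second containment shows to be a reachability fact. For \textsc{Sum}, the interpolant $\psi$ satisfies $\models \sem{\body{P}}_\sigma^{b-1} \implies \psi$ with $\psi$ over $\formals{P}$, so $\models (\exists \locals{P} \st \sem{\body{P}}_\sigma^{b-1}) \implies \psi$, and the first containment gives $\sem{P}^b_I \subseteq I(\psi)$, i.e. $P \models_b \psi$. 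The base case (rule \textsc{Init}) holds because the incoming maps satisfy the invariant, which I would justify for \recmc by an outer induction on the main loop: the maps are empty at the start, each \bndsafety call preserves the invariant, and \textsc{CheckInductive} only copies a fact $\delta \in \sigma(P,n)$ to $\langle P,n{+}1\rangle$ after verifying $\models \sem{\body{P}}_\sigma^{n} \implies \delta$, which by the same over-approximation argument (with $b = n{+}1$) makes $\delta$ a valid summary fact for $\langle P,n{+}1\rangle$.

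Finally I would close the two reported verdicts. When \textsc{Unsafe} fires, $\not\models \sem{\sum{M}}_\rho^n \implies \varphi_{\mathit{safe}}$ exhibits a model $I$ of $\thry$ and a valuation in $I(\sem{\sum{M}}_\rho^n) \subseteq \sem{M}^n_I \subseteq \sem{M}_I$ violating $\varphi_{\mathit{safe}}$, so the program is genuinely unsafe. When \textsc{Safe} fires, $\models \sem{\sum{M}}_\sigma^n \implies \varphi_{\mathit{safe}}$ together with $\sem{M}^n_I \subseteq I(\sem{\sum{M}}_\sigma^n)$ gives $M \models_n \varphi_{\mathit{safe}}$; and if \textsc{CheckInductive} then succeeds, the invariant shows $O_\sigma^n$ is safe while the checked implications $\models \sem{\body{P}}_\sigma^n \implies O_\sigma^n(\sum{P})$ show it is inductive, so by the least fixed-point characterization of $\sem{\cdot}_I$ (soundness of the standard invariant rule) we obtain $\sem{M}_I \subseteq I(\varphi_{\mathit{safe}})$, i.e. unbounded safety. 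The main obstacle I anticipate is purely the bound bookkeeping: getting the off-by-one indices ($b$ versus $b-1$, and the $\ge b$ in $O$ versus $\le b$ in $U$) to line up with monotonicity so that the approximation lemmas apply at exactly the right bound; the logical content of each rule is otherwise routine once the lemmas are fixed.
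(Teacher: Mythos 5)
Your proposal is correct and follows essentially the same route as the paper: the paper's proof establishes exactly the invariant $I(U_\rho^{b}(\sum{Q})) \subseteq \sem{Q}_I^b \subseteq I(O_\sigma^b(\sum{Q}))$ (your per-fact invariant is equivalent to this via monotonicity of bounded semantics in the bound) and checks that \textsc{Sum} and \textsc{Reach} preserve it, concluding via the firing conditions of \textsc{Unsafe} and \textsc{Safe}. If anything, your write-up is more complete than the paper's, which skips the \textsc{Reach} case as ``similar,'' leaves the positivity/monotonicity step implicit, and dismisses the \recmc-level argument (empty initial maps, \textsc{CheckInductive} propagation, and the inductive-proof-implies-unbounded-safety step) with ``easily follows.''
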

\setcounter{theorem}{\theoldtheorem}

\begin{proof}
We only show the soundness of \bndsafety; the soundness of \recmc easily
follows. In particular, for
$\bndsafety(M,\varphi_\mathit{safe},n,\emptyset,\emptyset)$ we show the
following:
\begin{enumerate}
    \item if the premises of \textsc{Unsafe} hold, $M \not\models_n \varphi_\mathit{safe}$, and
    \item if the premises of \textsc{Safe} hold, $M \models_n \varphi_\mathit{safe}$.
\end{enumerate}


It suffices to show that the environments $U_\rho^b$ and $O_\sigma^b$,
respectively, under- and over-approximate the bounded semantics of the
procedures, for every $0 \le b \le n$. In particular, we show that the following is an
invariant of \bndsafety: for every model $I$ of the background theory $\thry$,
for every $Q \in \Prg$ and $b \in [0,n]$,
\begin{equation}
  \label{eq:bndsafety-invar}
  I(U_\rho^{b}(\sum{Q})) \subseteq \sem{Q}_I^b \subseteq I(O_\sigma^b(\sum{Q})).
\end{equation}

Initially, $\rho$ and $\sigma$ are empty and the invariant holds trivially.
\bndsafety updates $\sigma$ and $\rho$ in the rules \textsc{Sum} and
\textsc{Reach}, respectively. We show that these rules
preserve~\eqref{eq:bndsafety-invar}. We only show the case of
\textsc{Sum}. The case of \textsc{Reach} is similar.

Let $\langle P, \varphi, b \rangle \in \nodes$ be such that
\textsc{Sum} is applicable, \ie $\models \sem{\body{P}}_\sigma^{b-1} \implies
\neg\varphi$. Let $\psi = \textsc{Itp} (\sem{\body{P}}_\sigma^{b-1}
\comma \neg\varphi)$. Note that $\varphi$, and hence $\psi$, does not depend on
the local variables $\locals{P}$. Hence, we know that
\begin{equation}
    \label{eq:bndsafety-itp}
    \models \left(\exists \locals{P} \st \sem{\body{P}}_\sigma^{b-1}\right)
    \implies \psi.
\end{equation}
The case of $b=0$ is easy and we will skip it.
Let $I$ be an arbitrary model of $\thry$.  Assume
that~\eqref{eq:bndsafety-invar} holds at $b-1$ before applying the rule. In
particular, assume that for all $Q \in \Prg$, $\sem{Q}_I^{b-1} \subseteq
I(O_\sigma^{b-1}(\sum{Q}))$.
  
We will first show that the new summary fact $\psi$ over-approximates
$\sem{P}_I^b$. Let $J(I,\vec{X})$ be an expansion of $I$ as defined in
Section~\ref{sec:prelims}.
  
\begin{align*}
    \sem{P}_I^b &= J(I,\langle \sem{P_0}^{b-1}_I, \dots, \sem{P_n}^{b-1}_I \rangle)
                        (\exists \locals{P_i} \st \body{P_i}) &\\
                &= J(I,\langle I(O_\sigma^{b-1}(\sum{P_0})), \dots, I(O_\sigma^{b-1}(\sum{P_n})) \rangle)
                        (\exists \locals{P_i} \st \body{P_i}) &
                    (\text{hypothesis})\\
                &= I(\sem{\exists \locals{P} \st 
                    \body{P}}_{O^{b-1}_\sigma}) &
                    (\text{$O^{b-1}_\sigma$ is FO-definable})\\
                &= I(\exists \locals{P} \st \sem{\body{P}}_{O^{b-1}_\sigma})
                    & (\text{logic})\\
                &= I(\exists \locals{P} \st \sem{\body{P}}_\sigma^{b-1})
                    & (\text{notation})\\
                &\subseteq I(\psi) & (\text{from (\ref{eq:bndsafety-itp})})
\end{align*}
Next, we show that the invariant continues to hold. The map of summary facts is
updated to $\sigma' = \sigma \cup \{\langle P,b \rangle \mapsto \psi\}$. Now,
$\sigma'$ differs from $\sigma$ only for the procedure $P$ and every bound in
$[0,b]$. Let $b' \in [0,b]$ be arbitrary. Since~\eqref{eq:bndsafety-invar} was
true before applying $\textsc{Sum}$, we know that $\sem{P}_I^{b'} \subseteq
I(O_\sigma^{b'}(\sum{P}))$. As $\sem{P}_I^{b'} \subseteq \sem{P}_I^b
\subseteq I(\psi)$, it follows that $\sem{P}_I^{b'} \subseteq
I(O_{\sigma}^{b'}(\sum{P})) \cap I(\psi) \subseteq
I(O_{\sigma}^{b'}(\sum{P}) \land \psi) = I(O_{\sigma'}^{b'}(\sum{P}))$.
\qed
\end{proof}

\section{Termination of \bndsafety (Proof of Theorem~\ref{thm:termination})}
We first restate the theorem:
\setcounter{oldtheorem}{\thetheorem}
\setcounter{theorem}{1}
\begin{theorem}
Given an oracle for $\thry$, $\bndsafety(\Prg, \varphi, n,
\emptyset, \emptyset)$ terminates.
\end{theorem}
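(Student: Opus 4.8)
The plan is to show that only finitely many instances of the rules in Fig.~\ref{fig:basic_algo} can fire, after which $\nodes$ is empty and \textsc{Safe} or \textsc{Unsafe} closes the run. Since $\rho$ and $\sigma$ are only ever extended (every rule besides \textsc{Init} either leaves them fixed or adds a single fact, and only $\nodes$ is shrunk), the whole question reduces to bounding the number of queries ever placed in $\nodes$. Indeed, \textsc{Sum} and \textsc{Reach} each remove the very query they act on: for \textsc{Sum} this uses $\models \psi \implies \neg\varphi$ from the interpolant, and for \textsc{Reach} it uses $\not\models \sem{\pi}_\rho^{b-1} \implies \neg\varphi$, so that the added fact overlaps $\varphi$ at bound $c=b$. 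Hence the number of \textsc{Sum}- and \textsc{Reach}-steps is at most the number of query-instances ever created, and the number of \textsc{Query}-steps is exactly that number minus one. So it suffices to prove that finitely many queries are created.

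First I would establish the finiteness lemma advertised in the text: for every $b \in [0,n]$, only finitely many distinct formulas are ever added to $\rho(\cdot,b)$ during the run. This goes by induction on $b$. A fact added by \textsc{Reach} to $\rho(P,b)$ has the shape $\exists \locals{P} \st \sem{\pi}_\rho^{b-1}$ with $\pi \in \paths{P}$, where each call $\sum{Q}(\cdot)$ in $\pi$ is replaced by $U_\rho^{b-1}(\sum{Q}) = \bigor\{\delta \in \rho(Q,b') \mid b' \le b-1\}$. For $b=0$, $U_\rho^{-1}$ sends every symbol to $\bot$, so the fact depends only on $\pi$, and finiteness of $\paths{P}$ and of $\Prg$ suffices. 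For $b \ge 1$, the induction hypothesis makes the pool of reachability facts at bounds $<b$ finite, so each $U_\rho^{b-1}(\sum{Q})$---a disjunction over a subset of that finite pool---ranges over finitely many formulas; hence so does $\exists \locals{P} \st \sem{\pi}_\rho^{b-1}$ for each of the finitely many $\pi$. The induction must be taken over the entire run rather than over a single ``stage'', because $\rho$ keeps growing while queries at several bounds are processed in interleaved fashion.

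The bridge from this lemma to a bound on queries rests on one observation about \textsc{Query}: at the moment a query $\node{R}{\psi}{b-1}$ is created, $\psi$ is disjoint from the current under-approximation $U_\rho^{b-1}(\sum{R})$. This follows from the premise $\models \sem{\pi_u}_\sigma^{b-1} \land \sem{\sum{R}(\vec{a})}_\rho^{b-1} \land \sem{\pi_v}_\rho^{b-1} \implies \neg\varphi$: any $\vec{a}$ satisfying both $\psi$ and $U_\rho^{b-1}(\sum{R})$ would make $\sem{\pi_u}_\sigma^{b-1} \land \sem{\pi_v}_\rho^{b-1} \land \varphi$ satisfiable under that valuation of the call, contradicting the premise. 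Consequently, whenever \textsc{Reach} later answers such a query it must add a genuinely new reachability fact enlarging $U_\rho^{b-1}(\sum{R})$ on the region $\psi$ (otherwise the query would already have been removed by the \textsc{Reach} side-condition), and by the lemma this happens only finitely often. Dually, when \textsc{Sum} answers a query $\node{R}{\psi}{b-1}$ it adds an interpolant $\psi'$ with $\psi' \implies \neg\psi$ at bound $b-1$, so that afterwards $O_\sigma^{b-1}(\sum{R}) \implies \neg\psi$ permanently; together with the \textsc{Query} side-condition requiring each fresh query to be disjoint from those already present, this prevents the same ``gap'' in $R$'s behaviour from being re-queried.

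The main obstacle is making this last step fully rigorous: ruling out an infinite stream of \textsc{Query}-applications that keep manufacturing pairwise-disjoint sub-queries for one fixed $\langle R,b-1\rangle$ driven only by repeated strengthenings of $\sigma$. The argument has to combine three ingredients---(i) finiteness of reachability facts, which caps the \textsc{Reach}-answered queries; (ii) the fact that a \textsc{Sum}-answered query is thereafter blocked by $O_\sigma^{b-1}$ and so, by the disjointness side-condition, cannot be revisited; and (iii) an induction on the bound, since a query at bound $b-1$ can only be spawned from a query at bound $b$, with the top bound $n$ carrying the single initial query. I expect the delicate point to be ingredient (ii): in the basic (MBP-free) rules the query formula is a full existential projection rather than a member of a finite image, so one must argue that the blocked region is never productively re-entered, rather than that the query formulas themselves come from a finite set---in contrast to the MBP variant of Theorem~\ref{thm:termination_preserving}, where image-finiteness makes this immediate.
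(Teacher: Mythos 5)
Your first half lines up with the paper's own scaffolding: the reduction of termination to bounding the number of created queries, the finiteness of reachability facts per bound (the paper's ``Finite Reach Facts'' lemma, proved there by a counting recurrence rather than your induction on the syntactic pool, but to the same effect), and the disjointness-at-creation invariant (the paper's ``Pending Queries'' lemma). The genuine gap is exactly the step you flag yourself, and your proposed ingredient (ii) would not close it: it is \emph{not} true that once \textsc{Sum} answers $\node{R}{\psi}{b-1}$ the region $\psi$ can never be productively re-queried. The non-overlap side-condition constrains only queries simultaneously present in $\nodes$, and a later query formula $\psi'$ is the projection of $\sem{\pi_u}_\sigma^{b-1} \land \sem{\pi_v}_\rho^{b-1} \land \varphi$, which does \emph{not} conjoin $\sem{\sum{R}(\vec{a})}_\sigma^{b-1}$; so after $\rho$ grows, $\psi'$ may overlap the previously blocked $\psi$. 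The paper's ``Finite Queries'' lemma makes no permanent-blocking claim. Instead it counts queries per path and per \emph{division} of that path into prefix, call, and suffix (at most $c \cdot p$ divisions): after a query for a fixed division is answered by \textsc{Sum}, the last premise of \textsc{Query} fails for that division and keeps failing under further $\sigma$-updates (strengthening $\sigma$ only helps), so a new query for that division can arise only after $U_\rho^{b-1}$ changes for some callee; a \textsc{Reach} answer changes $U_\rho^{b-1}$ outright. Since updates of $U_\rho^{b-1}$ are finite by the reach-fact lemma, \textsc{Query} fires $O(c \cdot N^b \cdot p^{b+1})$ times per pending query, and a recurrence over the bound finishes the count. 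In short: overlapping re-queries are permitted, but each one is gated on an actual update of the under-approximation, and those updates are finitely many.

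There is a second, smaller gap: you assert that once no rule fires, $\nodes$ is empty, so \textsc{Safe} or \textsc{Unsafe} applies. That requires the paper's Progress lemma: whenever $\nodes \neq \emptyset$, one of \textsc{Sum}, \textsc{Reach}, \textsc{Query} is applicable. This is not automatic --- one must show that if neither \textsc{Sum} nor \textsc{Reach} applies to a pending query, some path $\pi$ admits a division satisfying \emph{all} premises of \textsc{Query}. The paper does this by interpolating between the all-$\sigma$ and all-$\rho$ instantiations of $\pi$'s conjuncts and picking a maximal switch index, which is then forced to be a call literal $\sum{R}(\vec{a})$; and the non-overlap side-condition is dischargeable by always processing a query of smallest bound. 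Without progress, bounding the number of rule firings only shows the system halts, possibly stuck with unanswered queries, not that it returns SAFE or UNSAFE.
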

\setcounter{theorem}{\theoldtheorem}

We begin with showing some useful lemmas. Let $p$ be the maximum number of paths
in $P \in \Prg$, $c$ be the maximum number of procedure calls along any path in
$\Prg$, $N$ be the number of procedures in $\Prg$ and assume an oracle for SAT
modulo $\thry$.

The following lemma shows that when a query is removed from $\nodes$, it is
actually answered. The proof is immediate from the definitions of $O_\sigma^b$
and $U_\rho^b$ given in Section~\ref{sec:prelims}.

\begin{lemma}[Answered Queries]
\label{lem:relevant_facts}
Whenever \bndsafety removes a query from $\nodes$, it is answered using the
known summary and reachability facts. In particular, let
\textsc{Sum} or \textsc{Reach} be applied to $\langle P, \varphi, b \rangle$.
Then, for every $\langle P, \eta, b \rangle \in \nodes$ removed from $\nodes$ by
the rule,
\begin{enumerate}
    \item after \textsc{Sum} is applied, $\models \sem{\sum{P}}_\sigma^b
        \implies \neg\eta$,
        and
    \item after \textsc{Reach} is applied, $\not\models \sem{\sum{P}}_\rho^b
        \implies \neg\eta$.
\end{enumerate}
\end{lemma}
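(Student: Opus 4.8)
The plan is to unfold the definitions of the environments $O_\sigma^{\cdot}$ and $U_\rho^{\cdot}$ from Section~\ref{sec:prelims} and observe that the newly inferred fact is precisely the extra conjunct (resp.\ disjunct) that the removal side-condition of the rule already accounts for. Throughout, I write $c$ for the bound of a removed query $\node{P}{\eta}{c}$ and $b$ for the bound of the query $\node{P}{\varphi}{b}$ that triggers the rule, and I write $\sigma'$ (resp.\ $\rho'$) for the map after the update.

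For the \textsc{Sum} case, recall that the rule adds $\psi$ at $\langle P,b\rangle$ and removes exactly those $\node{P}{\eta}{c}$ with $c \le b$ and $\models \sem{\sum{P}}_\sigma^c \land \psi \implies \neg\eta$, where the latter is evaluated in the pre-update map. Since $\sem{\sum{P}}_\sigma^c = O_\sigma^c(\sum{P}) = \bigand\{\delta \in \sigma(P,b') \mid b' \ge c\}$ and $c \le b$, the new fact $\psi$ (sitting at bound $b \ge c$) enters this conjunction after the update, so $\sem{\sum{P}}_{\sigma'}^c \equiv \sem{\sum{P}}_\sigma^c \land \psi$. The removal side-condition is then literally $\models \sem{\sum{P}}_{\sigma'}^c \implies \neg\eta$, which is the claim.

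The \textsc{Reach} case is dual. The rule adds $\psi = \exists \locals{P} \st \sem{\pi}_\rho^{b-1}$ at $\langle P,b\rangle$ and removes exactly those $\node{P}{\eta}{c}$ with $c \ge b$ and $\not\models \psi \implies \neg\eta$, i.e.\ $\psi \land \eta$ is satisfiable. Since $\sem{\sum{P}}_\rho^c = U_\rho^c(\sum{P}) = \bigor\{\delta \in \rho(P,b') \mid b' \le c\}$ and $b \le c$, the new fact $\psi$ enters this disjunction after the update, so $\sem{\sum{P}}_{\rho'}^c \equiv \sem{\sum{P}}_\rho^c \lor \psi$. Because $\psi$ alone is consistent with $\eta$ and $\psi \implies \sem{\sum{P}}_{\rho'}^c$, the whole disjunction is consistent with $\eta$, giving $\not\models \sem{\sum{P}}_{\rho'}^c \implies \neg\eta$.

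There is no genuine obstacle here; the one thing to check with care is that the bound inequality points the right way in each case, so that the newly added fact is actually visible at the removed query's bound. For \textsc{Sum} this is $c \le b$, so the fact at the larger bound $b$ is picked up by the conjunction defining $O^c$; for \textsc{Reach} this is $c \ge b$, so the fact at the smaller bound $b$ is picked up by the disjunction defining $U^c$. I would also remark that, in the lemma's phrasing, the bound written $b$ in the conclusion is the removed query's \emph{own} bound, which I have denoted $c$ to keep it distinct from the triggering bound; the argument above applies uniformly to every removed query, irrespective of its bound.
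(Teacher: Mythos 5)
Your proof is correct and matches the paper's approach exactly: the paper dismisses this lemma as ``immediate from the definitions of $O_\sigma^b$ and $U_\rho^b$,'' and your argument is precisely that unfolding, showing $\sem{\sum{P}}_{\sigma'}^c \equiv \sem{\sum{P}}_\sigma^c \land \psi$ (resp.\ $\sem{\sum{P}}_{\rho'}^c \equiv \sem{\sum{P}}_\rho^c \lor \psi$) so that the removal side-condition of each rule becomes literally the lemma's conclusion. Your careful handling of the bound directions ($c \le b$ for \textsc{Sum}, $c \ge b$ for \textsc{Reach}) and of the overloaded symbol $b$ in the lemma statement is exactly the right reading.
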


Next, we show that new facts need to be inferred to answer queries remaining in
$\nodes$.

\begin{lemma}[Pending Queries]
\label{lem:reuse}
$\nodes$ only has the queries which
cannot be immediately answered by $\rho$ or $\sigma$, \ie
as long as $\langle P, \eta, \ell \rangle$ is in $\nodes$, the following
are invariant across iterations of \bndsafety.
    \begin{enumerate}
        \item $\not\models \sem{\sum{P}}_\sigma^{\ell} \implies \neg\eta$, and
        \item $\models \sem{\sum{P}}_\rho^{\ell} \implies \neg\eta$.
    \end{enumerate}
\end{lemma}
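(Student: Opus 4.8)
The plan is to prove items~1 and~2 simultaneously as an invariant maintained by every rule of \bndsafety, arguing by induction on the length of the derivation. Throughout, recall the shorthand $\sem{\sum{P}}_\sigma^{c} = O_\sigma^c(\sum{P})$ and $\sem{\sum{P}}_\rho^{c} = U_\rho^c(\sum{P})$ from Section~\ref{sec:algo}, and note that item~1 just says $\sem{\sum{P}}_\sigma^{\ell}\land\eta$ is satisfiable while item~2 says $\sem{\sum{P}}_\rho^{\ell}\land\eta$ is unsatisfiable; I will track these two (un)satisfiability statements. The two structural facts I rely on follow immediately from the definitions of $O_m^b$ and $U_m^b$ in Section~\ref{sec:prelims}: inserting a summary fact for $P$ at bound $b$ conjoins it to $\sem{\sum{P}}_\sigma^{c}$ for every $c \le b$ and leaves $\sem{\sum{Q}}_\sigma^{c}$ unchanged whenever $Q\ne P$ or $c>b$; dually, inserting a reachability fact for $P$ at bound $b$ disjoins it to $\sem{\sum{P}}_\rho^{c}$ for every $c\ge b$ and leaves the environment unchanged whenever $Q\ne P$ or $c<b$.

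For the base case, \textsc{Init} creates the single query $\node{M}{\neg\varphi_{\mathit{safe}}}{n}$ with empty maps, so $\sem{\sum{M}}_\rho^{n}=\bot$ (item~2 trivial) and $\sem{\sum{M}}_\sigma^{n}=\top$, so item~1 holds as long as $\varphi_{\mathit{safe}}$ is not valid (if it were, \textsc{Sum} would apply immediately and the query would never persist). The rules \textsc{Unsafe} and \textsc{Safe} fire only when $\nodes=\emptyset$ and alter neither $\nodes$, $\rho$, nor $\sigma$, so there is nothing to check.

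The two preservation steps for queries already present are \textsc{Sum} and \textsc{Reach}, and the point is that each removal set is designed to delete \emph{exactly} the queries its update would falsify. Consider \textsc{Sum}, which adds a fact $\psi$ for $P$ at bound $b$ and does not touch $\rho$, so item~2 is untouched; by the structural fact, item~1 for a surviving $\node{Q}{\eta}{c}$ can only be threatened when $Q=P$ and $c\le b$, where the new environment is $\sem{\sum{P}}_\sigma^{c}\land\psi$. But the rule removes every such query with $\models\sem{\sum{P}}_\sigma^{c}\land\psi\implies\neg\eta$, so any survivor still has $\sem{\sum{P}}_\sigma^{c}\land\psi\land\eta$ satisfiable, i.e.\ item~1. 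The \textsc{Reach} case is dual: it leaves item~1 untouched, threatens item~2 only for $\node{P}{\eta}{c}$ with $c\ge b$ where the environment becomes $\sem{\sum{P}}_\rho^{c}\lor\psi$, and removes exactly those with $\not\models\psi\implies\neg\eta$; a survivor has both $\sem{\sum{P}}_\rho^{c}\land\eta$ (old item~2) and $\psi\land\eta$ unsatisfiable, hence $(\sem{\sum{P}}_\rho^{c}\lor\psi)\land\eta$ unsatisfiable, i.e.\ item~2.

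The main obstacle is \textsc{Query}, the only rule creating a fresh query $\node{R}{\psi}{b-1}$ with $\psi=\big(\exists(\formals{P}\cup\locals{P})\setminus\vec{a}\st \sem{\pi_u}_\sigma^{b-1}\land\sem{\pi_v}_\rho^{b-1}\land\varphi\big)[\vec{a}\leftarrow\formals{R}]$; both items must be established from the rule's two feasibility premises by a projection-and-renaming argument. For item~1, the premise $\not\models\sem{\pi_u}_\sigma^{b-1}\land\sem{\sum{R}(\vec{a})}_\sigma^{b-1}\land\sem{\pi_v}_\rho^{b-1}\implies\neg\varphi$ yields a model of the conjunction; its restriction to $\vec{a}$, renamed $\vec{a}\mapsto\formals{R}$, satisfies both $\sem{\sum{R}}_\sigma^{b-1}$ and $\psi$, so item~1 holds. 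For item~2, the premise $\models\sem{\pi_u}_\sigma^{b-1}\land\sem{\sum{R}(\vec{a})}_\rho^{b-1}\land\sem{\pi_v}_\rho^{b-1}\implies\neg\varphi$ says the same conjunction with the \emph{under}-approximation of the call is unsatisfiable; any model of $\sem{\sum{R}}_\rho^{b-1}\land\psi$ would, after renaming $\formals{R}\mapsto\vec{a}$ and re-instantiating witnesses for the projected-out variables, contradict this, so item~2 holds. The delicate points are that $\psi$ quantifies away precisely the non-argument variables (leaving $\vec{a}$ free to match the call's arguments) and that $\sem{\sum{R}(\vec{a})}_{\cdot}^{b-1}$ is literally $\sem{\sum{R}}_{\cdot}^{b-1}$ with $\formals{R}$ instantiated by $\vec{a}$; once these are spelled out, both directions are routine. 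Finally, since \textsc{Query} changes neither $\rho$ nor $\sigma$, the invariant for all other queries is unaffected, which closes the induction.
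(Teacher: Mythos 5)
Your proof is correct and follows essentially the same route as the paper's: establish both invariants at the moment a query is created from the two \textsc{Query} premises (your model-restriction-and-renaming argument is just the semantic form of the paper's step of universally quantifying the non-shared variables $(\formals{P}\cup\locals{P})\setminus\vec{a}$ out of the right-hand side), and then observe that the removal sets of \textsc{Sum} and \textsc{Reach} delete exactly those queries whose invariant the corresponding environment update $\sem{\sum{P}}_\sigma^{c}\land\psi$ or $\sem{\sum{P}}_\rho^{c}\lor\psi$ would falsify. The only difference is your explicit base case for \textsc{Init} (which the paper omits entirely), and your caveat there about a valid $\varphi_{\mathit{safe}}$ is a harmless degenerate corner that the paper's own proof also glosses over.
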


\begin{proof}
We first show that the invariants hold when a query is newly created by {\sc
Query}. Let $P$, $\eta$ and $\ell$ be, respectively, $R$, $\psi[\vec{a} \leftarrow
\formals{R}]$ and $b-1$, as in the conclusion of the rule.
The last-but-one premise of \textsc{Query} is
\[
    \models \sem{\pi_u}_\sigma^{b-1} \land \sem{\sum{R}(\vec{a})}_\rho^{b-1} \land \sem{\pi_v}_\rho^{b-1}
    \implies \neg\varphi
\]
which implies that
\[
    \models \sem{\sum{R}(\vec{a})}_\rho^{b-1} \implies \neg \left( \sem{\pi_u}_\sigma^{b-1}
    \land \sem{\pi_v}_\rho^{b-1} \land \varphi \right).
\]
The variables not in common, \viz $(\formals{P} \cup \locals{P}) \setminus
\vec{a}$, can be universally quantified from the right hand side resulting in
$\models \sem{\sum{R}}_\rho^{b-1} \implies \neg\eta$.  Similarly, $\not\models
\sem{\sum{R}}_\sigma^{b-1} \implies \neg\eta$ follows from the last premise of
the rule. Next, we show that \textsc{Sum} and \textsc{Reach} preserve the
invariants.

Let \textsc{Sum} answer a query $\langle P, \varphi, \ell \rangle$ with a new
summary fact $\psi$ and let the updated map of summary facts be $\sigma' =
\sigma \cup \{\langle P,\ell \rangle \mapsto \psi\}$. Now, consider $\langle P,
\eta, \ell' \rangle \in \nodes$ after the application of the rule. If $\ell' >
\ell$, $O^{\ell'}_{\sigma'} = O^{\ell'}_\sigma$ and the invariant continues to
hold.  So, assume $\ell' \le \ell$. From the conclusion of \textsc{Sum}, we have
$\not\models \sem{\sum{P}}^{\ell'}_\sigma \land \psi \implies \neg\eta$.  Now,
$\sem{\sum{P}}^{\ell'}_{\sigma'} = \sem{\sum{P}}^{\ell'}_\sigma \land \psi$. So,
the invariant continues to hold.

Similarly, let \textsc{Reach} answer a query $\langle P, \varphi, \ell \rangle$
with a new reachability fact $\psi$ and let the updated map of reachability
facts be $\rho' = \rho \cup \{\psi \mapsto \langle P,\ell \rangle\}$. Now,
consider $\langle P, \eta, \ell' \rangle \in \nodes$ after the application of
the rule. If $\ell' < \ell$, $U^{\ell'}_{\rho'} = U^{\ell'}_\rho$ and the
invariant continues to hold. So, assume $\ell' \ge \ell$.  From the conclusion
of \textsc{Reach}, we have $\models \psi \implies \neg\eta$. Assuming the
invariant holds before the rule application, we also have $\models
\sem{\sum{P}}_\rho^{\ell'} \implies \neg\eta$. Therefore, we have $\models
\sem{\sum{P}}_\rho^{\ell'} \lor \psi \implies \neg\eta$. Now,
$\sem{\sum{P}}^{\ell'}_{\rho'} = \sem{\sum{P}}^{\ell'}_\rho \lor \psi$. So, the
invariant continues to hold.
\qed
\end{proof}

The next few lemmas show that the rules of the algorithm cannot be applied
indefinitely, leading to a termination argument. Let $N$ be the number of
procedures in the program, $p$ be the maximum number of paths in a procedure,
$c$ be the maximum number of procedure calls along any path in $\Prg$ and $n$ be
the current bound on the call-stack.

\begin{lemma}[Finite Reach Facts]
\label{lem:finite_reach}
The environment $U^b_\rho$ is updated for a given predicate symbol $\sum{P}$ and
a bound $b$ on the call-stack only $O(N^b \cdot p^{b+1})$-many times.
\end{lemma}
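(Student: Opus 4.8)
The plan is to prove the bound by induction on the call-stack depth $b$, by setting up and solving a recurrence for the number of reachability facts. Let $u(b)$ denote the maximum, over all procedures $P$, of the number of times $U^b_\rho(\sum{P})$ is updated during the entire run; by the definition of $U^b_\rho$ this equals the number of distinct reachability facts ever added to $\rho(P,b')$ for some $b'\le b$. My first step is to argue that these updates never re-add a fact already present, so that ``number of updates'' and ``number of distinct facts'' coincide. Indeed, whenever \textsc{Reach} answers a query $\langle P,\varphi,b'\rangle$ with a new fact $\psi = \exists \locals{P}\st\sem{\pi}_\rho^{b'-1}$, we have $\not\models \psi \implies \neg\varphi$; if $\psi$ were already a disjunct of $U^{b'}_\rho(\sum{P})$ then $\not\models \sem{\sum{P}}_\rho^{b'}\implies\neg\varphi$, contradicting the invariant of Lemma~\ref{lem:reuse} for the pending query $\langle P,\varphi,b'\rangle$. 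Since $\rho$ only grows and $U^b_\rho$ aggregates all lower bounds, the same observation shows that no formula is added twice among bounds $\le b$.

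Next I would bound the number of facts produced at a fixed bound. By the \textsc{Reach} rule, a fact at $\langle P,b'\rangle$ is completely determined by the chosen path $\pi \in \paths{P}$ (at most $p$ choices) together with the current value of the environment $U^{b'-1}_\rho$ on the procedures called along $\pi$. The key combinatorial observation is that $U^{b'-1}_\rho$ grows monotonically — each application of \textsc{Reach} only adds disjuncts — so the tuple of callee environments seen along $\pi$ takes a new value only when one of its components is updated. Since there are at most $N$ procedures and each is updated at most $u(b'-1)$ times at bounds $\le b'-1$, this tuple takes at most $1 + N\cdot u(b'-1)$ distinct values over the whole run, and hence a fixed path yields at most that many distinct facts. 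Summing over the $\le p$ paths gives $r(b') \le p\,(1 + N\cdot u(b'-1))$ for the number $r(b')$ of distinct facts inferred at exactly bound $b'$.

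Finally I would assemble these into the recurrence $u(b) = u(b-1) + r(b) \le (1 + pN)\,u(b-1) + p$, with base case $u(-1)=0$ (at bound $-1$ every predicate is interpreted as $\bot$, so no facts exist). Solving this linear recurrence — or verifying the closed form $u(b) = O(N^b\,p^{b+1})$ directly by induction, the dominant contribution being $pN\cdot u(b-1)$ — yields the claimed bound. The main obstacle is the combinatorial step in the middle paragraph: one must argue carefully that although $U^{b'-1}_\rho$ is substituted as a full disjunction (rather than a single callee fact) into $\sem{\pi}_\rho^{b'-1}$, the number of distinct resulting facts is still controlled, by charging each new fact to a monotone update of some callee's environment and counting those updates through the inductive hypothesis.
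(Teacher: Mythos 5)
Your proposal is correct and follows essentially the same route as the paper's proof: both use the pending-query invariant (Lemma~\ref{lem:reuse}) to argue that no reachability fact is inferred twice, charge each new fact at bound $b$ to a path choice together with an update of some callee's environment $U^{b-1}_\rho$, and solve the resulting recurrence (your $u(b) \le (1+pN)\,u(b-1)+p$ is, up to an additive term that does not affect the asymptotics, the paper's $\mathit{Reach}(b) = (pN+1)\cdot\mathit{Reach}(b-1)$ with $\mathit{Reach}(0)=p$). Your middle paragraph merely makes explicit the monotone-tuple counting that the paper states informally, so the argument matches the paper's in substance.
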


\begin{proof}
The environment $U^b_\rho$ can be updated for $\sum{P}$ and $b$ whenever a
reachability fact is inferred for $P$ at a bound $b' \le b$. Now, a reachability
fact is obtained per path (after eliminating the local variables) of a
procedure, using the currently known reachability facts about the callees.
Moreover, Lemmas~\ref{lem:relevant_facts} and~\ref{lem:reuse} imply that no
reachability fact is inferred twice. This is because whenever a query is
answered using \textsc{Reach}, the query could not have been answered using
already existing reachability facts and a new reachability fact is inferred.

This gives the following recurrence $\mathit{Reach}(b)$ for the number of
updates to $U^b_\rho$ for a given $\sum{P}$:
\[
    \mathit{Reach}(b) =
        \begin{cases}
                                 p, & b = 0\\
            (p \cdot N + 1) \cdot \mathit{Reach}(b-1), & b > 0.
        \end{cases}
\]
In words, for $b=0$, the number of updates is given by the number of
reachability facts that can be inferred, which is bounded by the number of paths
$p$ in the procedure $P$. For $b>0$, the environment $U^{b-1}_\rho$ is updated
when a reachability fact is learnt at $b$ or at a bound smaller than $b$. The
latter is simply $\mathit{Reach}(b-1)$. For the former, a new reachability fact
is inferred at $b$ along a path whenever $U^{b-1}_\rho$ changes for a callee.
For $N$ procedures and $p$ paths, this is given by $(p \cdot N \cdot
\mathit{Reach}(b-1))$.

This gives us $\mathit{Reach}(b) = O(N^b \cdot p^{b+1})$.
\qed
\end{proof}

\begin{lemma}[Finite Queries]
\label{lem:finite_query}
For $\node{P}{\varphi}{b} \in \nodes$, \textsc{Query} is applicable only
$O(c \cdot N^b \cdot p^{b+1})$-many times.
\end{lemma}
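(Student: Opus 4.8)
The plan is to bound the number of \textsc{Query} applications that enlarge $\nodes$ with a child of the \emph{fixed} query $q = \node{P}{\varphi}{b}$, and to show this count is $O(c \cdot N^{b} \cdot p^{b+1})$ by charging each application to an update of a reachability fact at level $b-1$. First I would partition the applications according to the data selected by the premises of \textsc{Query}: the path $\pi \in \paths{P}$ and the position of the distinguished call $\sum{R}(\vec{a})$ inside $\pi$ (equivalently, the split $\pi = \pi_u \land \sum{R}(\vec{a}) \land \pi_v$). Since $P$ has at most $p$ paths and each path has at most $c$ calls, there are at most $p \cdot c$ such \emph{slots}, so it suffices to bound the firings within a single slot.

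For a fixed slot, observe that the child query built by the rule, $\psi = (\exists (\formals{P}\cup\locals{P})\setminus\vec{a} \st \sem{\pi_u}_\sigma^{b-1} \land \sem{\pi_v}_\rho^{b-1} \land \varphi)[\vec{a}\leftarrow\formals{R}]$, is a function only of the current over-approximation $\sem{\pi_u}_\sigma^{b-1}$ of the prefix and the current under-approximation $\sem{\pi_v}_\rho^{b-1}$ of the suffix (with $\varphi$ fixed). The engine of the argument is the side-condition of \textsc{Query}, which forbids the new $\psi$ from overlapping any pending child $\node{R}{\eta}{b-1}$; combined with Lemma~\ref{lem:reuse} (a pending query is blocked by the current reachability facts and consistent with the current summary facts), this prevents two firings in the same slot from producing the same $\psi$. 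I would then argue that, because both maps grow monotonically, the only way to obtain a \emph{new} $\psi$ in the slot is a strict enlargement of the under-approximating environment $\sem{\pi_v}_\rho^{b-1}$ (or of $\sem{\sum{R}(\vec{a})}_\rho^{b-1}$), i.e. the addition of a reachability fact at some level $\le b-1$ for one of the $\le c$ callees in $\pi_v$ or for $R$. A strengthening of a \emph{summary} fact, by contrast, can only shrink the left-hand sides of the two implication premises, and hence can only \emph{disable} the slot, never re-enable it; this asymmetry is what makes the charge well defined.

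Consequently each firing in a slot is chargeable to a distinct reachability-fact update at level $b-1$, and each such update enables at most one fresh firing per slot. By Lemma~\ref{lem:finite_reach}, the under-approximation at level $b-1$ is updated $O(N^{b-1}\cdot p^{b})$ times for each predicate symbol, so summing over the $N$ procedures gives $O(N^{b}\cdot p^{b})$ updates in total; multiplying by the $p \cdot c$ slots yields the claimed $O(c \cdot N^{b} \cdot p^{b+1})$ bound. I expect the main obstacle to lie in the second paragraph's book-keeping: one must verify precisely that a repeated $\psi$ is impossible without an intervening reachability-fact update, handling the interaction of the two implication premises (the reachability of $R$ blocks $\varphi$ along $\pi$ while its summary does not) and ruling out that a summary update silently produces an uncharged new firing, since an answered child may leave $\nodes$ and therefore no longer constrain future children through the side-condition. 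Making this charge airtight, rather than the arithmetic of the recurrence, is where the care is needed.
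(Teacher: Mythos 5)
Your proposal is correct and follows essentially the same route as the paper's proof: the same decomposition into $p \cdot c$ (path, division) slots, the same use of the overlap side-condition plus the \textsc{Sum}/\textsc{Reach} asymmetry to argue that a slot can only be re-enabled by an update to $U_\rho^{b-1}$, and the same final count $O(p \cdot c \cdot N \cdot \mathit{Reach}(b-1)) = O(c \cdot N^b \cdot p^{b+1})$ via Lemma~\ref{lem:finite_reach}. The book-keeping you flag as delicate is handled in the paper exactly by the observation you sketch: once a pending child is answered by \textsc{Sum}, the last premise of \textsc{Query} fails for that slot and remains failing under further summary updates, so only a reachability update can produce a new, uncharged firing.
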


\begin{proof}
First, assume that the environments $U^{b-1}_\rho$ and $O^{b-1}_\sigma$ are
fixed.  The number of possible queries that can be created for a given path of
$P$ is bounded by the number of ways the path can be divided into a prefix, a
procedure call, and a suffix. This is bounded by $c$, the maximum number of
calls along the path. For $p$ paths, this is bounded by $c \cdot p$.

Consider a path $\pi$ and its division, and let a query be created for a callee
$R$ along $\pi$.  Now, while the query is still in $\nodes$, updates to the
environments $O^{b-1}_\sigma$ and $U^{b-1}_\rho$ do not result in a new query
for $R$ for the same division along $\pi$. This is because, the new query would
overlap with the existing one and this is disallowed by the second
side-condition of \textsc{Query}.

Suppose that the new query is answered by \textsc{Sum}. With the updated map of
summary facts, the last premise of \textsc{Query} can be shown to fail for the
current division of $\pi$. If $O^{b-1}_\sigma$ is updated, the last premise
continues to fail. So, a new query can be created for the same prefix and suffix
along $\pi$ only if $U^{b-1}_\rho$ is updated for some callee along $\pi$. The
other possibility is that the query is answered by \textsc{Reach} which updates
$U^{b-1}_\rho$ as well.

Thus, for a given path, and a given division of it into prefix and suffix, the
number of queries that can be created is bounded by the number of updates to
$U^{b-1}_\rho$ which is $(N \cdot \mathit{Reach}(b-1))$. Here, $\mathit{Reach}$
is as in Lemma~\ref{lem:finite_reach}. So, the number of
times \textsc{Query} is applicable for a given query $\node{P}{\varphi}{b}$ is
$O(p \cdot c \cdot N \cdot \mathit{Reach}(b-1))$. As $\mathit{Reach}(b) = N^b
\cdot p^{b+1}$, we obtain the bound $O(c \cdot N^b \cdot p^{b+1})$.
%
%
%
\qed
\end{proof}

\begin{lemma}[Progress]
\label{lem:progress}
As long as $\nodes$ is non-empty in \bndsafety, either \textsc{Sum},
\textsc{Reach} or \textsc{Query} is always applicable.
\end{lemma}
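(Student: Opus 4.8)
The plan is to pick a query $\node{P}{\varphi}{b} \in \nodes$ whose bound $b$ is \emph{minimal}, and to show that at least one of the three rules fires on it; the minimality of $b$ will be needed only at the very end, to discharge the non-overlap side-condition of \textsc{Query}. Everything hinges on comparing $\sem{\body{P}}_\sigma^{b-1}$ and $\sem{\body{P}}_\rho^{b-1}$ against $\neg\varphi$, which splits into three exhaustive cases. If $\models \sem{\body{P}}_\sigma^{b-1} \implies \neg\varphi$, then \textsc{Sum} is applicable and we are done. Otherwise, if $\not\models \sem{\body{P}}_\rho^{b-1} \implies \neg\varphi$, I would use $\body{P} \equiv \bigor_{\pi \in \paths{P}} \pi$ (the body is equivalent to the disjunction of its prime-implicants) together with the fact that environment instantiation distributes over $\lor$ to get $\sem{\body{P}}_\rho^{b-1} \equiv \bigor_\pi \sem{\pi}_\rho^{b-1}$; hence some path $\pi$ satisfies $\not\models \sem{\pi}_\rho^{b-1} \implies \neg\varphi$ and \textsc{Reach} fires. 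These two cases are immediate.

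The substantive case is $\not\models \sem{\body{P}}_\sigma^{b-1} \implies \neg\varphi$ \emph{and} $\models \sem{\body{P}}_\rho^{b-1} \implies \neg\varphi$, where I must synthesize a \textsc{Query} step. As above there is a path $\pi$ with $\sem{\pi}_\sigma^{b-1} \land \varphi$ satisfiable, while $\sem{\pi}_\rho^{b-1} \land \varphi$ is unsatisfiable (the latter for every path, in particular this one). If $\pi$ had no calls, then $\sem{\pi}_\sigma^{b-1}$ and $\sem{\pi}_\rho^{b-1}$ would be the same formula (no predicate symbol to instantiate), contradicting the two facts; so $\pi$ contains calls $\sum{R_1}(\vec{a}_1), \dots, \sum{R_k}(\vec{a}_k)$ with $k \ge 1$. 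For $0 \le j \le k$ let $H_j$ be the instantiation of $\pi$ in which the first $j$ calls are interpreted via $O_\sigma^{b-1}$ and the remaining $k-j$ via $U_\rho^{b-1}$, so that $H_0 = \sem{\pi}_\rho^{b-1}$ and $H_k = \sem{\pi}_\sigma^{b-1}$. The under/over-approximation invariant established in the proof of Theorem~\ref{thm:soundness} yields $\models U_\rho^{b-1}(\sum{R}) \implies O_\sigma^{b-1}(\sum{R})$, so replacing a $\rho$-call by the corresponding $\sigma$-call only weakens the conjunction; hence $H_0 \implies H_1 \implies \cdots \implies H_k$.

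Since $H_0 \land \varphi$ is unsatisfiable and $H_k \land \varphi$ is satisfiable, there is a least index $j \in \{1, \dots, k\}$ with $H_j \land \varphi$ satisfiable, and then $H_{j-1} \land \varphi$ is unsatisfiable. Setting $R = R_j$ with arguments $\vec{a} = \vec{a}_j$, and splitting $\pi$ at this call into a prefix $\pi_u$ and a suffix $\pi_v$, one checks directly that $\sem{\pi_u}_\sigma^{b-1} \land \sem{\sum{R}(\vec{a})}_\rho^{b-1} \land \sem{\pi_v}_\rho^{b-1}$ equals $H_{j-1}$ and $\sem{\pi_u}_\sigma^{b-1} \land \sem{\sum{R}(\vec{a})}_\sigma^{b-1} \land \sem{\pi_v}_\rho^{b-1}$ equals $H_j$; together with $\models \sem{\body{P}}_\rho^{b-1} \implies \neg\varphi$, these are exactly the three non-trivial premises of \textsc{Query}. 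For the side-condition, note first that $b \ge 1$ here: at $b=0$ both $O_\sigma^{-1}$ and $U_\rho^{-1}$ map every symbol to $\bot$, so $\sem{\body{P}}_\sigma^{-1}$ and $\sem{\body{P}}_\rho^{-1}$ are the same formula and cannot simultaneously entail and fail to entail $\neg\varphi$. Thus the created query $\node{R}{\psi}{b-1}$ sits at a valid bound $b-1 \ge 0$, and since $b$ was chosen minimal, $\nodes$ contains no query at bound $b-1$, so the non-overlap condition holds vacuously and \textsc{Query} applies.

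I expect the hybrid/monotonicity argument to be the crux: obtaining the chain $H_0 \implies \cdots \implies H_k$ from the reachability-implies-summary invariant, using it to locate the boundary call via a discrete intermediate-value step, and then recognizing the two successive hybrids $H_{j-1}$ and $H_j$ as precisely the two satisfiability premises of \textsc{Query}. The remaining bookkeeping (the $\body{P} \equiv \bigor_\pi \pi$ rewriting, the $b \ge 1$ check, and the vacuity of the side-condition under the minimal-$b$ choice) is routine.
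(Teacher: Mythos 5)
Your proof is correct and takes essentially the same approach as the paper: the same three-way case split on which rule fires, the same hybrid-instantiation sequence along a path to locate the call where the $\rho$- and $\sigma$-approximations disagree (whose two adjacent hybrids are exactly the two satisfiability premises of \textsc{Query}), and the same minimal-bound choice to discharge the non-overlap side-condition vacuously. The remaining differences are cosmetic: the paper switches its hybrids at arbitrary literal positions and then argues the boundary literal must be a call, it picks the maximal unsatisfiable index rather than the least satisfiable one, and it does without your monotonicity chain $H_0 \implies \cdots \implies H_k$, which is in fact not needed for the discrete intermediate-value step.
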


\begin{proof}
First, we show that for every query in $\nodes$, either of the three rules is
applicable, without the second side-condition in \textsc{Query}.
Let $\langle P,\varphi,b \rangle \in \nodes$.
If $\models \sem{\body{P}}_\sigma^{b-1} \implies \neg\varphi$, then \textsc{Sum} is applicable. Otherwise, there exists
a path $\pi \in \paths{P}$ such that $\sem{\pi}_\sigma^{b-1}$ is satisfiable with
$\varphi$, \ie $\not\models \sem{\pi}_\sigma^{b-1} \implies \neg\varphi$. Now, if
$\sem{\pi}_\rho^{b-1}$ is also satisfiable with $\varphi$, \ie $\not\models
\sem{\pi}_\rho^{b-1} \implies \neg\varphi$, \textsc{Reach} is applicable. Otherwise,
$\models \sem{\pi}_\rho^{b-1} \implies \neg\varphi$. Note that this can only happen if
$b > 0$, as otherwise, there will not be any procedure calls along $\pi$ and
$\sem{\pi}_\sigma^{b-1}$ and $\sem{\pi}_\rho^{b-1}$ would be equivalent.

Let $\pi = \pi_0 \land \pi_1 \land \dots \pi_l$ for some finite $l$. Then, $\sem{\pi}_\sigma^{b-1}$ is
obtained by taking the conjunction of the formulas
\[
    \langle \sem{\pi_0}_\sigma^{b-1}, \sem{\pi_1}_\sigma^{b-1}, \dots \rangle.
\]
Similarly, $\sem{\pi}_\rho^{b-1}$ is obtained by taking the conjunction of the
formulas
\[
    \langle \sem{\pi_0}_\rho^{b-1}, \sem{\pi_1}_\rho^{b-1}, \dots \rangle.
\]
From Theorem~\ref{thm:soundness}, we can think of obtaining the latter sequence of
formulas by conjoining $\sem{\pi_i}_\rho^{b-1}$ to $\sem{\pi_i}_\sigma^{b-1}$
for every $i$. When this is done backwards for decreasing values of $i$, an intermediate sequence looks like
\[
    \langle \sem{\pi_0}_\sigma^{b-1}, \dots, \sem{\pi_{j-1}}_\sigma^{b-1},
    \sem{\pi_j}_\rho^{b-1} \dots \rangle.
\]
As $\sem{\pi}_\rho^{b-1}$ is unsatisfiable with $\varphi$, there
exists a maximal $j$ such that the conjunction of constraints in such an
intermediate sequence are unsatisfiable with $\varphi$. Moreover, $\pi_j$ must be
a literal of the form $\sum{R}(\vec{a})$ as otherwise, $\sem{\pi_j}_\sigma^{b-1} =
\sem{\pi_j}_\rho^{b-1}$ violating the maximality condition on $j$.
Thus, all premises of \textsc{Query} hold and the rule is applicable.

Now, the second side-condition in \textsc{Query} can be trivially satisfied by
always choosing a query in $\nodes$ with the smallest bound for the next rule to
apply. This is because, if $\node{R}{\eta}{b-1}$ is the newly created query, there
is no other query in $\nodes$ for $R$ and $b-1$.
\qed
\end{proof}

Lemmas~\ref{lem:finite_query} and~\ref{lem:progress} imply that every query in
$\nodes$ is eventually answered by \textsc{Sum} or \textsc{Reach}, as shown
below.

\begin{lemma}[Eventual Answer]
\label{lem:eventual_answer}
Every $\node{P}{\varphi}{b} \in \nodes$ is eventually answered by \textsc{Sum} or
\textsc{Reach}, in $O(b \cdot c^b \cdot (Np)^{O(b^2)})$ applications
of the rules.
\end{lemma}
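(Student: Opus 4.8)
The plan is to prove both halves of the statement at once by induction on the bound $b$, bundling the ``eventually answered'' claim with an explicit upper bound $T(b)$ on the number of rule applications needed to answer a node $\node{P}{\varphi}{b}$. The two workhorses are Lemma~\ref{lem:progress} (Progress), which guarantees that \emph{some} rule fires whenever $\nodes \neq \emptyset$, and Lemma~\ref{lem:finite_query} (Finite Queries), which caps the number of times \textsc{Query} can be charged to a fixed node at $Q_b = O(c \cdot N^b \cdot p^{b+1})$. The key structural observation that makes the induction go through is that \textsc{Query} applied to a bound-$b$ node creates only a \emph{single} bound-$(b-1)$ child $\node{R}{\psi}{b-1}$, so the entire sub-derivation triggered by one node lives at strictly smaller bounds and is already governed by the induction hypothesis.

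First I would dispatch the base case $b=0$. There $\sem{\cdot}_\sigma^{-1}$ and $\sem{\cdot}_\rho^{-1}$ both send every call to $\bot$ and hence coincide, so the branch in the proof of Lemma~\ref{lem:progress} that requires \textsc{Query} cannot occur; Progress then forces \textsc{Sum} or \textsc{Reach}, answering the node in $T(0)=O(1)$ steps. For the inductive step, fix $\node{P}{\varphi}{b}$ with $b>0$. By Lemma~\ref{lem:finite_query} at most $Q_b$ applications of \textsc{Query} are ever charged to this node; each costs one rule step and spawns one child at bound $b-1$, which by the induction hypothesis is answered in at most $T(b-1)$ steps. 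After the (finitely many) \textsc{Query} steps are exhausted, \textsc{Query} is no longer applicable to the node, so Progress forces \textsc{Sum} or \textsc{Reach} to answer it. This yields the recurrence
\[
  T(b) \;\le\; Q_b\,\bigl(1 + T(b-1)\bigr) + 1, \qquad T(0) = O(1).
\]

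To solve it I would note $T(b) \le 3\,Q_b\,T(b-1)$ for $b \ge 1$ and unroll, giving $T(b) = O\!\left(\prod_{i=1}^{b} Q_i\right)$. Since $\prod_{i=1}^{b} Q_i = c^{b}\,N^{\sum_i i}\,p^{\sum_i (i+1)} = c^{b}\,(Np)^{O(b^2)}$ (the $\sum_i i$ and $\sum_i(i+1)$ are both $O(b^2)$, and the per-level constants contribute only a $2^{O(b)}$ factor absorbed into $(Np)^{O(b^2)}$), I obtain $T(b) = c^{b}\,(Np)^{O(b^2)}$, which is comfortably inside the stated $O(b \cdot c^{b} \cdot (Np)^{O(b^2)})$ bound; the extra linear factor is slack. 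This simultaneously shows every node is eventually answered by \textsc{Sum} or \textsc{Reach} and establishes the complexity.

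The main obstacle is making the recurrence honest under interleaving: the abstract transition system never forces a node's child to be resolved before work resumes on the node, and answering a child by \textsc{Sum} (resp.\ \textsc{Reach}) grows $O_\sigma^{b-1}$ (resp.\ $U_\rho^{b-1}$), which can re-enable \textsc{Query} on the parent with a \emph{different} prefix/suffix split. The crucial point is that Lemma~\ref{lem:finite_query} already counts \emph{all} such re-enablings over the whole run: its proof uses the non-overlap side-condition of \textsc{Query}, together with the monotone growth of the environments and the ``no fact inferred twice'' guarantees of Lemmas~\ref{lem:relevant_facts} and~\ref{lem:reuse}, to show that a given split along a path is never re-queried unless $U_\rho^{b-1}$ strictly grows for some callee. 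Hence charging each node its $\le Q_b$ \textsc{Query}-steps plus the cost of its $\le Q_b$ children is sound \emph{regardless} of the scheduling, and the recurrence genuinely upper-bounds the total work; I would close by remarking that the same $T(b)$ bound, summed with Progress applied once more when $\nodes$ empties, re-derives termination of \bndsafety.
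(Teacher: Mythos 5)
Your proposal is correct and follows essentially the same route as the paper's own proof: both rely on Lemma~\ref{lem:progress} plus Lemma~\ref{lem:finite_query}, set up the identical recurrence $T(b) \le Q(b)\,(1 + T(b-1))$ for the cumulative number of \textsc{Query} applications charged to a node, and solve it to land within the stated bound. Your explicit induction on $b$ and the charging argument for interleaved scheduling merely make precise what the paper leaves implicit, so there is no substantive difference in approach.
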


\begin{proof}
Firstly, to answer any given query in $\nodes$, Lemma~\ref{lem:finite_query} guarantees that the algorithm can only
create finitely many queries. Lemma~\ref{lem:progress} guarantees that some rule
is always applicable, as long as $\nodes$ is non-empty. Thus, when \textsc{Query} cannot be applied for
any query in $\nodes$, either \textsc{Sum} or \textsc{Reach} must be
applicable for some query. Thus, eventually, all queries are answered.

The total number of rule applications to answer $\node{P}{\varphi}{b}$ is then linear in
the cumulative number of applications of \textsc{Query}, which has the following recurrence:
\[
    T(b) =
        \begin{cases}
            Q(0), & b = 0\\
            Q(b) (1 + T(b-1)), & b > 0.
        \end{cases}
\]
where $Q(b)$ denotes the number of applications of \textsc{Query} for a
fixed query in $\nodes$ at bound $b$.
From Lemma~\ref{lem:finite_query}, $Q(b) = O(c \cdot N^{b} \cdot p^{b+1})$. This gives us
$T(b) = O(b \cdot c^b \cdot (Np)^{O(b^2)})$.
\qed
\end{proof}

\subheading{Main Proof} Follows immediately from
Lemma~\ref{lem:eventual_answer}.
\qed

\section{Complexity of \recmc for Boolean Programs (Proof of Theorem~\ref{thm:bool_prog})}
We first restate the theorem:
\setcounter{oldtheorem}{\thetheorem}
\setcounter{theorem}{2}
\begin{theorem}
Let $\Prg$ be a Boolean Program. Then $\recmc(\Prg, \varphi)$ terminates in
$O(N^2 \cdot 2^{2k})$-many applications of the rules in
Fig.~\ref{fig:basic_algo}.
\end{theorem}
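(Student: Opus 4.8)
The plan is to count, separately, the number of applications of each rule over the \emph{entire} run of $\recmc$ (across all iterations of the iterative-deepening loop), exploiting two facts special to Boolean Programs: (i) for every procedure $P$ the set of valuations of $\formals{P}$ has size at most $2^{|\formals{P}|} \le 2^k$, so every reachability or summary fact for $P$ denotes a subset of this finite set; and (ii) the bounded semantics $\sem{P}^b_I$ is monotone and eventually constant in $b$. Because the main loop threads $\rho$ and $\sigma$ through successive calls to $\bndsafety$, both maps only grow, so it suffices to bound the total number of fact insertions and query creations.

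First I would bound \textsc{Reach} and \textsc{Sum} per pair $\langle P, b\rangle$. Using the pending-query invariants of Lemma~\ref{lem:reuse}: whenever \textsc{Reach} answers $\langle P,\varphi,b\rangle$ we had $\models \sem{\sum{P}}_\rho^{b} \implies \neg\varphi$ beforehand, while the new fact $\psi$ stored in $\rho(P,b)$ is satisfiable with $\varphi$; any witness model therefore lies in the new but not the old $I(\bigor \rho(P,b))$, so the set of states denoted by the facts stored at $\langle P,b\rangle$ strictly grows. As it is a subset of a $2^k$-element set, \textsc{Reach} fires at most $2^k$ times per $\langle P,b\rangle$. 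Symmetrically, each \textsc{Sum} at $\langle P,b\rangle$ strictly shrinks $I(\bigand\sigma(P,b))$ (the \textsc{Itp} side-condition gives $\models\psi\implies\neg\varphi$ while the old over-approximation was satisfiable with $\varphi$), so \textsc{Sum} fires at most $2^k$ times per $\langle P,b\rangle$.

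Next I would bound the largest call-stack bound $n_{\max}$ reached by iterative deepening. By (ii) the product chain $\langle \sem{P}^b_I\rangle_{P\in\Prg}$ strictly increases until it saturates, and each strict step adds at least one state to some procedure's relation; since the total number of states over all procedures is at most $N\cdot 2^k$, saturation occurs at some depth $n^{*} = O(N\cdot 2^k)$. For $n \ge n^{*}$ the stored summary facts characterize the (now stationary) true semantics and \textsc{CheckInductive} succeeds, so the loop performs $O(N\cdot 2^k)$ iterations, each costing $O(N\cdot 2^k)$ for the inductiveness checks and the \textsc{Init}/\textsc{Safe}/\textsc{Unsafe} bookkeeping. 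Combining with the previous paragraph, \textsc{Reach} and \textsc{Sum} together contribute at most $N\cdot(n_{\max}+1)\cdot 2^k = O(N^2\cdot 2^{2k})$ applications.

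The main obstacle is \textsc{Query}, whose generic bound (Lemma~\ref{lem:finite_query}) is exponential in $b$. Here I would use bound-independence of reachability facts for Boolean Programs to collapse the recurrence $\mathit{Reach}(b)$ of Lemma~\ref{lem:finite_reach}: a state recorded for $\langle P,b'\rangle$ is, by monotonicity, never re-recorded at any bound $\ge b'$ (a second attempt would be blocked via Lemma~\ref{lem:reuse}), so the total number of updates to any $U^{b}_\rho(\sum{P})$ over the whole run is $O(2^k)$ rather than $O(N^b p^{b+1})$. Following the argument of Lemma~\ref{lem:finite_query}, each \textsc{Query} for a fixed caller is chargeable to such an update of a callee's under-approximation: re-creation of a sub-query for the same callee and bound is forbidden by the non-overlap side-condition until some $U^{b-1}_\rho$ changes, and always choosing the query with the smallest bound discharges that side-condition trivially. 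Summing the $O(2^k)$ updates over the $N$ callees and the $O(N\cdot 2^k)$ relevant bounds again yields $O(N^2\cdot 2^{2k})$ \textsc{Query} applications. Adding the three rule counts and the per-iteration overhead gives the claimed $O(N^2\cdot 2^{2k})$ bound; the only delicate point is justifying the collapse of $\mathit{Reach}(b)$, \ie that Boolean-program reachability facts are genuinely bound-independent and hence never recomputed across the deepening loop.
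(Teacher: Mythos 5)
Your per-bound accounting is sound and is essentially the paper's: at most $2^k$ applications of \textsc{Sum}/\textsc{Reach} per pair $\langle P,b\rangle$ (strict shrinkage/growth of the denoted state sets, justified via Lemma~\ref{lem:reuse}) and $O(2^k)$ queries per pair (non-overlap plus finitely many valuations), which multiplied by $N$ procedures and $O(N\cdot 2^k)$ bound levels gives the claimed total. The genuine gap is in how you bound the number of iterative-deepening rounds. You argue that the concrete bounded semantics $\sem{P}^b_I$ saturates at some $n^* = O(N\cdot 2^k)$ and that ``for $n \ge n^*$ the stored summary facts characterize the (now stationary) true semantics and \textsc{CheckInductive} succeeds.'' That inference does not hold: the facts produced by \textsc{Itp} are arbitrary over-approximations, and nothing forces them to coincide with, or even stabilize when, the true semantics saturates. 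Inductiveness is a property of the inferred environment, not of $\sem{P}^n_I$: the check $\models \sem{\body{P}}_\sigma^n \implies \delta$ interprets callees by over-approximations that may contain unreachable tuples whose images under $\body{P}$ violate $\delta$, so the check can keep failing for bounds well beyond $n^*$. This is exactly the familiar IC3 phenomenon: the number of frames is bounded by the number of states because consecutive \emph{frames} must eventually coincide, not because a frame ever equals the reachable set; IC3 can need many more frames than the reachability diameter, and the same is true here.

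The paper closes this step differently, and that is the argument you need: count refinements of the inferred over-approximations themselves. Every useful summary fact strictly shrinks some $I(O_\sigma^b(\sum{P}))$, a subset of at most $2^k$ valuations, and $O_\sigma^b$ is monotone in $b$; hence only $O(N\cdot 2^k)$ such refinements exist in total, and since each non-terminating round of \recmc must contribute at least one, the number of rounds is $O(N\cdot 2^k)$. Your final figure agrees with the paper's only because both $n^*$ and this refinement count happen to be bounded by $N\cdot 2^k$; the reason the loop stops, however, is stabilization of the over-approximations, not of the semantics. The remainder of your proposal (collapsing the $\mathit{Reach}(b)$ recurrence for Boolean programs and charging \textsc{Query} applications to callee fact updates) is in the spirit of the paper's first step and goes through once this iteration bound is repaired.
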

\setcounter{theorem}{\theoldtheorem}

\begin{proof}
First, assume a bound $n$ on the call-stack. The number of queries that can be
created for a procedure at any given bound is
$O(2^k)$, the number of possible valuations of the parameters (note that
\textsc{Query} disallows overlapping queries to be present simultaneously in
$\nodes$). For $N$ procedures and $n$ possible values of the bound, the
complexity of $\bndsafety(\Prg, \varphi, n, \emptyset, \emptyset)$, for a Boolean
Program, is $O(N \cdot 2^k \cdot n)$.

Now, the total number of summary facts that can be inferred for a procedure is
also bounded by $O(2^k)$. As $O_\sigma^b$ is monotonic in $b$, the number
of iterations of \recmc is bounded by $O(N \cdot 2^k)$, the cumulative number of states
of all procedures. Thus, we obtain the complexity of \recmc as $O(N^2 \cdot 2^{2k})$.
\qed
\end{proof}


\section{\bndsafety with MBP (Proof of Theorem~\ref{thm:termination_preserving})}
Here, we show that \bndsafety with MBP is sound and terminating.

First of all, in presence of MBP, \textsc{Sum} is unaffected and a reachability fact
inferred by \textsc{Reach} is only strengthened. Thus, soundness of \bndsafety
(Theorem~\ref{thm:soundness}) is preserved.

Then, it is easy to show that the modified side-conditions to \textsc{Reach} and
\textsc{Query} preserve Lemmas~\ref{lem:relevant_facts} and~\ref{lem:reuse} and
we skip the proof.

Then, we will show that the \emph{finite-image} property of an MBP preserves
the finiteness of the number of reachability facts inferred and the number of
queries generated by the algorithm.
Let $d$ be the size of the image of an MBP.
In the proof of Lemma~\ref{lem:finite_reach}, the recurrence relation has an
extra factor of $d$. The rest of the proof of finiteness of the number of reachability
facts remains the same. Similarly, in the proof of Lemma~\ref{lem:finite_query}, the number of
times \textsc{Query} can be applied along a path for a fixed division and fixed
environments $O^{b-1}_\sigma$ and $U^{b-1}_\rho$ increases by
a factor of $d$. Again, the rest of the proof of finiteness of the number of
queries generated remains the same. That is, Lemmas~\ref{lem:finite_reach}
and~\ref{lem:finite_query}, and hence, Lemma~\ref{lem:eventual_answer}, are
preserved with scaled up complexity bounds.

Note that Theorem~\ref{lem:progress} is unaffected by under-approximations.

Together, we have that Theorem~\ref{thm:termination} is preserved, with a scaled up
complexity bound.
\qed

\section{$\lraprojf{\lambda}$ is an MBP (Proof of Lemma~\ref{thm:model_based_proj})}
\newcommand{\sub}[1]{\text{\emph{Sub}}_t({#1})}
First, we restate the theorem:
\setcounter{oldtheorem}{\thetheorem}
\setcounter{theorem}{3}
\begin{theorem}
$\lraprojf{\lambda}$ is a Model Based Projection.
\end{theorem}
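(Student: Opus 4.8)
The plan is to verify the three defining conditions of a Model Based Projection for $\lraprojf{\lambda}$ in turn: (a) the image is finite, (b) $M \models \lraprojf{\lambda}(M)$ for every model $M$ of $\lambda_m$, and (c) $\lambda \equiv \bigor_{M \models \lambda_m} \lraprojf{\lambda}(M)$. Condition (a) is immediate, since by construction every value of $\lraprojf{\lambda}$ is one of the disjuncts $\lambda_m[e]$, $\lambda_m[\ell+\epsilon]$, or $\lambda_m[-\infty]$ occurring on the right-hand side of the LW identity~\eqref{eq:lw}, and there are only finitely many of these as $\lits{\lambda}$ is finite. I would also note that $\lraprojf{\lambda}$ is a well-defined total map: the three guards are mutually exclusive by the ``else if''/``otherwise'' structure, the syntactic tie-break pins down a unique lower bound $\ell$ in the middle case, and the cases are exhaustive.

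The core of the argument is condition (b), which I would establish by the three-way case split of the definition, unified by a single observation. Each virtual substitution replaces an $x$-literal $L$ by a quantifier-free $x$-free formula; I claim that, under the relevant case hypothesis, no $x$-literal flips from true to false under $M$, i.e. $M \models L$ implies $M \models L[\cdot]$ for every such $L$ (with an exact match in the equality case). Because $\lambda_m$ is in Negation Normal Form and hence monotone in its literals, turning some literals ``more true'' while leaving the $x$-free literals unchanged preserves satisfaction; thus $M \models \lambda_m$ yields $M \models \lambda_m[\cdot]$, which is exactly $\lraprojf{\lambda}(M)$.

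I would then discharge the three cases using the LRA semantics of the substitution. In the equality case ($M \models x=e$, so $M(x)=M(e)$) each substituted literal $\ell'<e$, $e<u'$, $e=e'$ has the same value under $M$ as $\ell'<x$, $x<u'$, $x=e'$, giving an exact literal-wise match. The infinitesimal case is where the maximality of $\ell$ is essential: a true lower-bound literal $\ell'<x$ has $M(\ell')\le M(\ell)$ by maximality (so its substitute $\ell'\le\ell$ is true), a false one has $M(\ell')>M(\ell)$ (so $\ell'\le\ell$ stays false), a true upper-bound literal $x<u'$ has $M(\ell)<M(x)<M(u')$ (so $\ell<u'$ is true), and every equality literal is false by the case guard and maps to $\bot$; the only changes are false upper bounds possibly becoming true, which monotonicity absorbs. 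The $-\infty$ case is analogous: the guard forces all lower-bound and equality literals false (they map to $\bot$) while upper bounds map to $\top$, so again nothing flips true to false. I expect the main obstacle to be precisely this bookkeeping — checking that maximality of $\ell$ rules out any spurious true-to-false flip of a lower-bound literal — with monotonicity of NNF as the lever that neutralizes the harmless false-to-true flips of upper bounds.

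Finally, condition (c) follows from (a), (b), and the LW identity. For $\lambda \implies \bigor_M \lraprojf{\lambda}(M)$, given $\vec{a} \models \lambda$ I would pick a witness $x_0$ with $\lambda_m(x_0,\vec{a})$ and let $M$ be the resulting model of $\lambda_m$; by (b) $M \models \lraprojf{\lambda}(M)$, and since this formula is $x$-free, $\vec{a}$ satisfies it. Conversely, each $\lraprojf{\lambda}(M)$ is a disjunct of the right-hand side of~\eqref{eq:lw}, which is equivalent to $\lambda = \exists x \st \lambda_m$, so each implies $\lambda$ and therefore so does their (finite) disjunction.
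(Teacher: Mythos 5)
Your proof is correct and follows essentially the same route as the paper's: a case analysis over the three virtual substitutions $\lambda_m[e]$, $\lambda_m[\ell+\epsilon]$, and $\lambda_m[-\infty]$, showing that under the relevant case guard every $x$-literal satisfied by $M$ remains satisfied after substitution, and then invoking monotonicity of the NNF formula $\lambda_m$ in its literals. The only difference is that you explicitly verify the covering condition $\lambda \equiv \bigor_{M \models \lambda_m} \lraprojf{\lambda}(M)$ from the finite-image property, part (b), and the LW identity~\eqref{eq:lw}, whereas the paper compresses this step into the remark that finite image plus $M \models \lraprojf{\lambda}(M)$ ``suffices.''
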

\setcounter{theorem}{\theoldtheorem}

\begin{proof}
By definition, $\lraprojf{\lambda}$ has a finite image, as there are only
finitely many disjuncts in (\ref{eq:lw}). Thus, it suffices to
show that for every $M \models \lambda_m$, $M \models \lraprojf{\lambda}(M)$.

Each disjunct in the LW decomposition (\ref{eq:lw}) is obtained by a \emph{virtual
substitution} of
the literals in $\lambda_m$ containing $x$. As in Section~\ref{sec:lazy_qe}, we
assume that $\lambda_m$ is in NNF with the only literals containing $x$ of the
form $(x=e)$, $(\ell<x)$ or $(x<u)$ for $x$-free terms $e$, $\ell$ and $u$.
Let $\mathit{Sub}_t$ denote the virtual substitution map of literals when $t$ is either $e$,
$\ell+\epsilon$ or $-\infty$. The LW method~\cite{lw} defines:
\begin{align}
    &    \mathit{Sub}_e(x=e) = \top, \mathit{Sub}_e(\ell<x) = (\ell<e), \mathit{Sub}_e(x<u) = (e<u) \\
    & \mathit{Sub}_{\ell+\epsilon}(x=e) = \bot, \mathit{Sub}_{\ell+\epsilon}(\ell'<x) = (\ell'
                            \le \ell), \mathit{Sub}_{\ell+\epsilon}(x<u) = (\ell<u)\\
    & \mathit{Sub}_{-\infty}(x=e) = \bot, \mathit{Sub}_{-\infty}(\ell<x) = \bot,
                                       \mathit{Sub}_{-\infty}(x<u) = \top
\end{align}

Let $M \models \lambda_m$ and $\lraprojf{\lambda}(M) = \lambda_m[t]$ where $t$ is
either $e$ or $\ell+\epsilon$ or $-\infty$.
As $\lambda_m$ is in NNF, it suffices to show that
for every literal $\mu$ of $\lambda_m$ containing $x$, the following holds:
\begin{equation}
    \label{eqn:mval_preserved}
    M \models (\mu \implies \mathit{Sub}_t(\mu))
\end{equation}

We consider the different possibilities of $t$ below. For a term $\eta$, let
$M[\eta]$ denote the value of $\eta$ in $M$.

\begin{description}
    \item {Case $t = e$.} In this case, we know that $M\models x=e$.
        Now, for a literal $\ell < x$,
        \begin{align*}
            M[\ell < x] &\implies M[\ell] < M[x] & \\
                        &= M[\ell] < M[e] & \\
                        &= M[\ell<e] &\\
                        &= M[\sub{\ell<x}] & \{\sub{\ell<x} = (\ell < e)\}.
        \end{align*}
        Similarly, literals of the form $x<u$ and $x=e'$ can be considered.

    \item {Case $t =\ell+\epsilon$.} In this case, we know that $M[\ell<x]$ is
        true, \ie $M[\ell] < M[x]$ and whenever $M[\ell' < x]$ is true, $M[\ell'
        \le \ell]$ is also true.
        Now, for a literal $\ell' < x$,
        \begin{align*}
            M[\ell' < x] &\implies M[\ell' \le \ell] & \\
                        &= M[\sub{\ell' < x}] & \{\sub{\ell'<x} = (\ell' \le
        \ell)\}.
        \end{align*}

        For a literal $x < u$,
        \begin{align*}
            M[x < u] &\implies M[x] < M[u] & \\
                     &\implies M[\ell] < M[u] & \{M[\ell] < M[x]\}\\
                     &\implies M[\ell<u] & \\
                     &= M[\sub{x<u}] & \{\sub{x<u} = (\ell<u)\}
        \end{align*}

        For a literal $x=e$, (\ref{eqn:mval_preserved}) vacuously holds as
        $M[x=e]$ is false.

    \item {Case $t = -\infty$.} In this case, we know that $M[x=e]$ and $M[\ell < x]$
        are false for every literal of the form $x=e$ and $\ell<x$. So, for such literals
        (\ref{eqn:mval_preserved}) vacuously holds. For
        a literal $x<u$, $\sub{x<u} = \top$ and hence, (\ref{eqn:mval_preserved})
        holds again.
\end{description}
%
\qed
\end{proof}

\newcommand{\liaprojf}[1]{\mathit{LIAProj}_{#1}}
\section{Model Based Projection for  Linear Integer Arithmetic}
In this section, we present our MBP $\liaprojf{\lambda}$ for LIA.  It
is based on Cooper's method for Quantifier Elimination
procedure~\cite{cooper}. Let $\lambda(\vec{y}) = \exists x \st
\lambda_m (x,\vec{y})$, where $\lambda_m$ is quantifier free and in
negation normal form. Without loss of generality, let the only
literals containing $x$ be the form $\ell < x$, $x < u$, $x = e$ or
$(d \mid \pm x + w)$, where $a \mid b$ denotes that $a$ divides
$b$, the terms $\ell$, $u$, $e$ and $w$ are $x$-free, and $d \in
\mathbb{Z} \setminus \{0\}$. Let $E = \{e \mid (x = e) \in
\lits{\lambda_m}\}$ be the set of equality terms of $x$ and $L =
\{\ell \mid (\ell < x) \in \lits{\lambda_m}\}$ be the set of
lower-bounds of $x$.  Then, by Cooper's method,
\begin{equation}
  \label{eq:cooper}
    \exists x \st
\lambda_m (x,\vec{y}) \equiv \bigor_{(x=e) \in \lits{\lambda}}  \lambda_m[e] \lor                  \bigor_{(\ell < x) \in \lits{\lambda}} \left( \bigor_{i=0}^{D-1} \lambda_m[\ell + 1 + i] \right) \lor
                  \bigor_{i=0}^{D-1} \lambda_m^{-\infty}[i].
\end{equation}
where $D$ is the least common multiple of all the divisors in the
divisibility literals of $\lambda_m$, $[\cdot]$ denotes a substitution for $x$
and $\lambda_m^{-\infty}$ is
obtained from $\lambda_m$ by substituting all non-divisibility literals
as follows:
\begin{align}
(\ell < x)  &\mapsto  \bot &
(x < u)     &\mapsto  \top &
(x=e)       &\mapsto  \bot
\end{align}

Intuitively, the disjunction partitions the space of the possible values of $x$.
A disjunct for $(x=e)$ covers the case when $x$ is equal to an equality term.
Otherwise, the lower-bounds identify various intervals in which $x$ can be
present. The disjuncts for $(\ell < x)$ cover the case when $x$ satisfies a
lower-bound, and the last disjunct is for the case when $x$ is smaller than all
lower-bounds. The disjunction over the possible values of $i$ covers the
different ways in which the divisibility literals can be satisfied.

Model-based projection $\liaprojf{\lambda}$ is defined as follows,
conflicts are resolved by some arbitrary, but fixed, syntactic
ordering on terms:
\begin{equation}
    \liaprojf{\lambda}(M) = \begin{cases}
                    \lambda_m[e], & \text{if } x=e \in \lits{\lambda} \land  M\models (x=e) \\
                    \lambda_m[\ell + 1 + i_\ell], & \text{else if }
                          (\ell <x) \in \lits{\lambda} \land M \models (\ell<x)
                          \land{}\\ & \forall
                                (\ell'<x) \in \lits{\lambda} \st \left( M \models
                                  ((\ell' < x) \implies (\ell' \le \ell)) \right)\\
                    \lambda_m^{-\infty}[i_{-\infty}], & \text{otherwise}
                    \end{cases}
\end{equation}
where $i_\ell = M[x-(\ell+1)] \bmod D$,  $i_{-\infty} =
M[x] \bmod D$, and $M[x]$ is the value of $x$ in $M$.
The following lemma shows that $\liaprojf{\lambda}$ is indeed a model based
projection. The proof is similar to that of $\lraprojf{\lambda}$.

\begin{lemma}
\label{lem:liaprojf}
$\liaprojf{\lambda}$ is a Model-Based Projection.
\end{lemma}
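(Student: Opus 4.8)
The plan is to verify the three defining clauses of an MBP for $\liaprojf{\lambda}$: the image is finite, the covering equivalence $\lambda \equiv \bigor_{M \models \lambda_m} \liaprojf{\lambda}(M)$ holds, and $M \models \liaprojf{\lambda}(M)$ for every $M \models \lambda_m$. The first clause is immediate, since by construction $\liaprojf{\lambda}(M)$ is always one of the disjuncts of Cooper's decomposition~\eqref{eq:cooper}, of which there are finitely many ($\lits{\lambda}$ is finite and the indices $i_\ell, i_{-\infty}$ range over $\{0,\dots,D-1\}$). For the covering equivalence I would argue both directions. The direction $\bigor_M \liaprojf{\lambda}(M) \implies \lambda$ follows from the correctness of Cooper's method: every disjunct of~\eqref{eq:cooper} implies $\exists x \st \lambda_m = \lambda$, and $\liaprojf{\lambda}(M)$ is always such a disjunct. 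Conversely, given any $N \models \lambda$, pick a witness $a$ with $\expansion{N}{x}{a} \models \lambda_m$ and set $M = \expansion{N}{x}{a}$; once the third clause is established we get $M \models \liaprojf{\lambda}(M)$, and since this formula is $x$-free and $M$, $N$ agree off $x$, also $N \models \liaprojf{\lambda}(M)$.

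The only substantial work is the third clause, and this is where the divisibility literals (absent in LRA) demand a new argument. Fix $M \models \lambda_m$ and let $t$ be the term/branch chosen by $\liaprojf{\lambda}(M)$. Because $\lambda_m$ is in NNF and therefore monotone in its literals, it suffices to show that every literal $\mu$ of $\lambda_m$ satisfies $M \models (\mu \implies \mu[t])$, where $\mu[t]$ denotes the corresponding substituted literal and $\mu[t]=\mu$ for $x$-free $\mu$. For the ordering literals $\ell<x$, $x<u$, $x=e$ the reasoning is verbatim the LRA argument of Theorem~\ref{thm:model_based_proj}: in the $\ell+1+i_\ell$ branch one uses that $\ell$ is a maximal satisfied lower bound and that $M[\ell]<M[x]$ propagates to upper bounds, while in the $-\infty$ branch the branch condition guarantees that all equality and lower-bound literals are already false in $M$, making their substituted values $\bot$ harmless and the substituted $x<u$ literal $\top$. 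The genuinely new case is a divisibility literal $(d \mid \pm x + w)$: I would observe that in each branch the substituted term agrees with $M[x]$ modulo $D$. Indeed, in the $\ell+1+i_\ell$ branch $M[x]-(M[\ell]+1+i_\ell) = v-(v\bmod D)$ with $v=M[x]-M[\ell]-1$, a multiple of $D$; in the $-\infty$ branch $M[x]-i_{-\infty}=M[x]-(M[x]\bmod D)$ is a multiple of $D$; and in the $x=e$ branch $M[x]=M[e]$ already. Since $d \mid D$, the substituted value is congruent to $M[x]$ modulo $d$, so the divisibility literal keeps its truth value and the required implication holds. Note that the $-\infty$ substitution leaves divisibility literals intact, which is exactly why the periodicity argument is confined to them.

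The main obstacle is precisely this divisibility case: one must confirm that the residue bookkeeping — taking $i_\ell$ and $i_{-\infty}$ modulo $D$, with $D$ the least common multiple of all divisors — makes the chosen representative agree with $M[x]$ modulo \emph{every} individual divisor $d$ simultaneously, so that all divisibility literals retain their value at once. Once that congruence is in hand, monotonicity of NNF yields $M \models \lambda_m[t]$, i.e. $M \models \liaprojf{\lambda}(M)$, and the three clauses together establish that $\liaprojf{\lambda}$ is an MBP.
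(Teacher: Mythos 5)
Your proposal is correct and follows essentially the approach the paper intends: the paper proves the lemma by deferring to the LRA argument (literal-by-literal preservation under the chosen substitution, using NNF monotonicity), and your proof carries out exactly that adaptation, with the congruence-modulo-$D$ argument for divisibility literals being the one genuinely new case — handled correctly, since $i_\ell$ and $i_{-\infty}$ are residues of $M[x]-(M[\ell]+1)$ and $M[x]$ modulo $D$ and every divisor $d$ divides $D$. Your explicit two-directional verification of the covering equivalence (via correctness of Cooper's decomposition and projection of a witness model) is left implicit in the paper but is a sound completion of the same argument.
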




\end{document}